\documentclass[a4paper]{article}

\usepackage[T1]{fontenc}
\usepackage[utf8]{inputenc}
\usepackage{lmodern}
\DeclareUnicodeCharacter{00A0}{ }
\DeclareUnicodeCharacter{202F}{ }
\DeclareUnicodeCharacter{2007}{ }
\DeclareUnicodeCharacter{2009}{ }
\DeclareUnicodeCharacter{200A}{ }
\DeclareUnicodeCharacter{200B}{}
\DeclareUnicodeCharacter{200C}{}
\DeclareUnicodeCharacter{200D}{}
\DeclareUnicodeCharacter{2060}{}
\DeclareUnicodeCharacter{00AD}{}
\DeclareUnicodeCharacter{2010}{-}
\DeclareUnicodeCharacter{2011}{-}
\DeclareUnicodeCharacter{2012}{-}
\DeclareUnicodeCharacter{2013}{--}
\DeclareUnicodeCharacter{2014}{---}
\DeclareUnicodeCharacter{2212}{-}
\DeclareUnicodeCharacter{FEFF}{}

\usepackage{amsmath,amssymb,amsthm,mathrsfs}
\usepackage{graphicx}
\usepackage{caption}
\usepackage[textwidth=126mm,textheight=195mm]{geometry}
\usepackage[section]{placeins}
\usepackage[numbers,sort&compress]{natbib}
\usepackage[hidelinks]{hyperref}
\usepackage{microtype}
\providecommand{\doi}[1]{\url{https://doi.org/#1}}
\hypersetup{
  pdftitle={Forced condensation and anti-condensation on heavy-tailed networks},
  pdfauthor={T. Ashwin Bhattathiripad and Vipin P. Veetil},
  pdfkeywords={condensation, heavy-tailed networks, nonlinear Perron--Frobenius theory, Hilbert metric}
}

\newcommand{\R}{\mathbb{R}}
\newcommand{\N}{\mathbb{N}}
\newcommand{\1}{\mathbf{1}}
\newcommand{\Sn}{S_{n-1}}

\newcommand{\gstar}{\gamma_\star}

\newcommand{\IPR}{\operatorname{IPR}}

\newtheorem{theorem}{Theorem}[section]
\newtheorem{proposition}[theorem]{Proposition}

\newtheorem{lemma}[theorem]{Lemma}
\theoremstyle{definition}
\newtheorem{assumption}[theorem]{Assumption}
\theoremstyle{remark}
\newtheorem{remark}[theorem]{Remark}

\begin{document}

\title{Forced condensation and anti-condensation on heavy-tailed networks}

\author{T. Ashwin Bhattathiripad \quad Vipin P. Veetil\\[3pt]
\small Economics Area, Indian Institute of Management Kozhikode,\\
\small Kozhikode, Kerala 673570, India}
\date{}

\maketitle

\begin{abstract}
We study a driven selection mechanism on a fixed heavy-tailed network. At each step, a power-normalization rule recomputes the direction of fresh injection from the current mass profile. A primitive mixing matrix then transports the existing stock and the new mass. The exponent $\theta$ sets the feedback. Positive values give more weight to larger coordinates, whereas negative values favor smaller ones. After removing the deterministic growth of total mass, we give an explicit mixing--forcing condition under which the injection profile converges to a nonlinear Perron--Frobenius-type fixed point on the simplex. Hilbert's projective metric makes the stability mechanism transparent. The discounted network response draws positive profiles closer together, while the escort map scales their projective distance by $|\theta|$. On heavy-tailed networks, the fixed point separates three effects that are often conflated: response or degree tilt, anomalous inverse-participation-ratio scaling, and genuine few-node localization. Positive feedback favors high-response nodes and, when response follows degree, tilts the selected profile toward the hubs. Negative feedback favors low-response nodes. It usually produces a broad peripheral cloud unless the lower tail of the response field is itself thin. Computations on finite networks illustrate convergence, forcing-rate dependence, and the sign law. Monte Carlo samples of truncated heavy-tailed response profiles display the predicted participation-ratio crossover. A uniform comparison bound gives conditions under which this scaling transfers to the full fixed point. The mechanism differs from conserved-mass condensation and graph growth because feedback selects a non-equilibrium profile on a fixed, heterogeneous network.
\end{abstract}

\noindent\textbf{Keywords:} non-equilibrium statistical mechanics; complex networks; condensation; heavy-tailed networks; nonlinear Perron--Frobenius theory; Hilbert metric

\par\smallskip
\begin{center}
\small
\textit{Corresponding author:} Vipin P. Veetil (\texttt{vipin@iimk.ac.in}).\\
\textit{Other author email:} \texttt{ashwinbtt@gmail.com}.
\end{center}
\clearpage

\section{Introduction}\label{sec:introduction}

Condensation is one of the simplest ways for a many-component system to become highly unequal. In a zero-range or mass-transport process, a large share of the conserved mass can collect at a single site. In a growing network, a node with a fitness advantage can acquire a finite fraction of all links. Similar concentration arises in multiplicative economic models, where wealth may accumulate among a small number of agents~\cite{bouchaudmezard2000}. Although the mechanisms differ, they point to the same lesson. In a heterogeneous system, the object of interest is often not the total mass but the normalized profile selected by the dynamics.

We study a deliberately spare version of this problem. The network is fixed rather than grown by preferential attachment, and the total mass is not conserved. At each step, fresh mass is injected in a direction chosen from the current profile, after which the network mixes the result. The feedback rule is a power law. When $\theta>0$, nodes that already carry more mass receive a larger share of the next injection. When $\theta<0$, the rule is reversed and smaller coordinates are favored. The case $\theta=0$ is the neutral baseline.

The model is designed to isolate one mechanism. Condensation in complex networks is usually studied through conserved stochastic transport, zero-range processes, balls-in-boxes models, eigenvector localization, or graph growth with preferential attachment and fitness~\cite{bianconibarabasi2001,evanshanney2005,majumdarevanszia2005,bogaczburdajankewaclaw2007,dorogovtsevgoltsevmendes2008,martinzhangnewman2014,pastorsatorrascastellano2016}. These settings combine several ingredients. Here the substrate remains fixed and total mass grows at a trivial deterministic rate. Once that growth is divided out, all non-trivial behavior lies in the shape of the profile. We can therefore ask a focused statistical-mechanics question: which profile does feedback select on a fixed heterogeneous network, and how does it depend on the sign and strength of that feedback?

The normalized dynamics has a fixed point, and we give explicit mixing--forcing conditions under which the original recursion converges to it. This fixed point is a nonlinear Perron--Frobenius-type object~\cite{lemmensnussbaum2012}. For $\theta<0$, the power map reverses order, so we use the term in a projective sense. Its linear part is a discounted resolvent of the network matrix, and its nonlinear part is the escort, or power-normalization, map. Hilbert's projective metric yields the central estimate. The positive network resolvent contracts projective distances, whereas the escort map multiplies them exactly by $|\theta|$. The frozen-response map is therefore contractive throughout the graph-independent range $|\theta|\le1$, and strong mixing can extend this window.

On a heavy-tailed network, however, a single word such as ``condensed'' does not adequately describe this fixed point. Degree tilt, a participation-ratio transition, and the stronger limit of few-node localization need not occur together. Positive feedback tilts mass toward high-degree or high-centrality nodes, while negative feedback tilts it toward the periphery. Even with a visible degree tilt, the inverse participation ratio may remain of order $1/n$ over a broad parameter range. Anti-condensation is typically broad. If linearly many nodes lie near the lower end of the network response, negative feedback spreads mass across that peripheral class rather than concentrating it on one or two nodes.

This differs from the standard pictures in two respects. A zero-range process fixes the total mass and produces a condensate through stochastic transport. A growing-network model changes the substrate itself, and its winner may be a node that captures links. Neither happens here: the substrate is fixed and total growth is deterministic. The closest comparison is with state-dependent nonlinear random walks. Such models keep total walker mass fixed but feed the current profile back into the transition probabilities, allowing fixed points, bifurcations, and localized patterns~\cite{skardal2020,skardal2023,skardaladhikari2019}. After normalization, our process also has a nonlinear Markov form. Network transport itself remains fixed, however, and all state dependence enters through a rank-one reset that determines the direction of fresh injection. The injection profile therefore remains distinct from the standing stock, while the unnormalized dynamics retains its open-system interpretation.

The response to a fixed injection vector is a personalized-PageRank-type resolvent~\cite{gleich2015pagerank}. Recent work has also considered personalization vectors fixed by the PageRank response~\cite{alejaetal2026}. At $\theta=1$, our fixed-point equation reduces to this linear Perron problem. Away from that endpoint, the personalization vector is determined by a power normalization of the transported stock. We study the resulting nonlinear family for both signs of the feedback. We also establish convergence of the injected-mass recursion and analyze concentration for heavy-tailed response profiles.

The mechanism requires both ingredients. With forcing fixed in advance, the long-run profile would be the ordinary linear response of the network. With a completely homogeneous network response, the escort rule would have no structural information to act on. Here the graph creates a heterogeneous response field, and feedback repeatedly reweights the next injection against that field. The sign of $\theta$ is therefore decisive: positive feedback rewards large response coordinates, while negative feedback acts on the same field in the opposite direction.

We therefore use the word ``condensation'' with care. A hub-biased profile is not automatically a condensate. It can have a large covariance with degree and still spread its mass over a linear number of nodes. This is response or degree tilt. Anomalous IPR scaling is different: the IPR decays more slowly than $1/n$ but may still tend to zero. True few-node localization is stronger, because the IPR does not vanish and at least one node retains a non-vanishing share. We keep these three diagnostics separate. On a heavy-tailed graph they can disagree across a wide parameter range, especially on the negative branch, where the selected low-response set may contain many nodes.

Unless stated otherwise, the selected profile and all condensation diagnostics refer to the injection profile. The standing stock is a different object. It is linked to the injection profile, but the two need not have the same degree tilt or localization. Their balance relation and the associated transport currents are given in Online Resource~1 (Sec.~S5).

The paper is organized as follows. Section~\ref{sec:model} defines the process and its normalized recursion. Section~\ref{sec:fixed_point} gives the fixed-point and convergence results. Section~\ref{sec:reverse} develops reverse-kernel and linear-response interpretations. Section~\ref{sec:heavy_tail} gives the sign law, degree-class extension, and IPR transition. Section~\ref{sec:numerics} presents simulations.

\section{Model and normalized dynamics}\label{sec:model}

Vectors are columns. The network matrix acts on the left, and a column-stochastic matrix preserves total mass. Fix $n\in\N$ and write
\[
\Sn=\{x\in\R_{\ge0}^n:\1^\top x=1\},\qquad
\Sn^\circ=\Sn\cap\R_{>0}^n .
\]
The network is represented by a matrix $\mathbf A\in\R_{\ge0}^{n\times n}$.

\begin{assumption}\label{ass:A_primitive}
The matrix $\mathbf A$ is column-stochastic and primitive:
\[
\1^\top\mathbf A=\1^\top,\qquad \mathbf A^k>0\ \hbox{entrywise for some }k\ge1 .
\]
\end{assumption}

Column-stochasticity says that the mixing step neither creates nor destroys mass. Primitivity is the finite-state irreducibility-and-aperiodicity assumption: a perturbation can eventually be felt everywhere.

Let $\pi>0$ be the forcing rate and $\theta\in\R$ the feedback exponent. For $u\in\R_{>0}^n$ define the escort map~\cite{beckschlogl1993}
\begin{equation}\label{eq:Ptheta_def}
P_\theta(u)=\frac{u^{\circ\theta}}{\1^\top u^{\circ\theta}},\qquad
P_0(u)=\frac1n\1 .
\end{equation}
The power is componentwise. The definition of $P_0$ applies on $\R_{\ge0}^n\setminus\{0\}$. For $\theta>0$, the power formula is used on the same domain with the convention $0^\theta=0$. For $\theta<0$, all coordinates must be positive. We take $m_0\in\R_{\ge0}^n\setminus\{0\}$, with $m_0\in\R_{>0}^n$ when $\theta<0$. The mass vector then evolves by
\begin{equation}\label{eq:m_recursion}
m_{t+1}=\mathbf A\bigl(m_t+\pi M_t\,\gamma_t\bigr),\qquad
M_t=\1^\top m_t,\qquad
\gamma_t=P_\theta(m_t).
\end{equation}
Since $\1^\top\gamma_t=1$ and $\mathbf A$ is column-stochastic,
\[
M_{t+1}=(1+\pi)M_t,\qquad M_t=(1+\pi)^tM_0 .
\]
Thus the total mass grows at a fixed deterministic rate. Our principal observable is the injection profile $\gamma_t$.

Set
\[
a=\frac1{1+\pi},\qquad b=\frac{\pi}{1+\pi},\qquad u_t=a^t m_t .
\]
By homogeneity of $P_\theta$, $\gamma_t=P_\theta(u_t)$, and the rescaled recursion is
\begin{equation}\label{eq:u_recursion}
u_{t+1}=a\mathbf A u_t+bM_0\,\mathbf A\gamma_t,\qquad \gamma_{t+1}=P_\theta(u_{t+1}).
\end{equation}
Equivalently, the normalized standing stock
\begin{equation}\label{eq:x_def_main}
x_t=\frac{m_t}{M_t}=\frac{u_t}{M_0}
\end{equation}
obeys the exact autonomous recursion
\begin{equation}\label{eq:x_recursion_main}
x_{t+1}=\mathbf A\bigl(ax_t+bP_\theta(x_t)\bigr)
=\mathbf A\bigl(aI+bP_\theta(x_t)\1^\top\bigr)x_t.
\end{equation}
The matrix in the second pair of parentheses is column-stochastic, as is its product with $\mathbf A$. The normalized process is thus a nonlinear Markov chain with a state-dependent rank-one reset, even though the underlying transport matrix $\mathbf A$ is fixed. We use $u_t$ in the convergence proof because its constant mass $M_0$ makes the filter estimates transparent. The principal observable remains $\gamma_t$, which records where new mass is directed.
If the forcing profile were frozen at $z\in\Sn$, the stable linear filter in \eqref{eq:u_recursion} would converge to
\[
w(z)=bM_0\,\mathbf A(I-a\mathbf A)^{-1}z .
\]
This motivates the positive response matrix
\begin{equation}\label{eq:Lpi_def}
\mathbf L_\pi
=bM_0\,\mathbf A(I-a\mathbf A)^{-1}
=bM_0\sum_{k=0}^{\infty}a^k\mathbf A^{k+1}.
\end{equation}
In the series representation, a unit of injected mass is mixed once and then repeatedly mixed with discount $a^k$. The corresponding autonomous map on the simplex is
\begin{equation}\label{eq:T_def}
T(z)=P_\theta(\mathbf L_\pi z).
\end{equation}
Because $\1^\top\mathbf L_\pi=M_0\1^\top$, the normalized response
\begin{equation}\label{eq:Q_def}
\mathbf Q=\frac1{M_0}\mathbf L_\pi
\end{equation}
is strictly positive and column-stochastic. Since $P_\theta$ is homogeneous, $T(z)=P_\theta(\mathbf Qz)$.

The forcing rate $\pi$ changes the memory length of the resolvent. When $\pi$ is large, $a$ is small and recent injections dominate. When $\pi$ is small, $a$ is close to one and older transported mass continues to matter. A ``$\pi$-free'' approximation is therefore not universal. It is accurate only when mixing is rapid over the resolvent horizon. For a well-conditioned network with a spectral gap, the relevant ratio is $\pi/(1-|\lambda_2|)$. More generally, it is the forcing time scale relative to the mixing time. In this regime the response approaches the Perron profile, and the normalized fixed point depends mainly on the response field and on $\theta$. The simulations test this restricted claim.

The matrix $\mathbf Q$ also has a probabilistic interpretation. It gives the average destination of a unit of mass after one mandatory network step followed by a geometrically distributed number of further steps. The escort rule determines where mass is injected, while $\mathbf Q$ describes the transported response. At the fixed point, these descriptions agree after power normalization.

Two modeling choices merit clarification. A column-stochastic network matrix simply normalizes the transport step: the operator redistributes existing mass before the next injection arrives. Other physical normalizations reduce to this one after rescaling by the natural conserved left eigenvector, provided that the positive cone is preserved. Primitivity is not intended to exclude sparse networks. It is a finite-state regularity condition that makes the discounted response strictly positive. Simulations can use sparse primitive matrices, applying the resolvent through matrix-vector products or truncated filters instead of storing a dense matrix.

There is also a direct physical interpretation of \(\pi\). The mean number of extra network steps in the resolvent is \((1-\!a)^{-1}-1=1/\pi\). Small \(\pi\) lets injected mass travel for many steps before it is effectively discounted, whereas large \(\pi\) emphasizes the first few steps after injection. This changes the quantitative response field but not the basic sign mechanism. Once \(\mathbf Q\) is fixed, the outcome is governed by the competition between the projective contraction of \(\mathbf Q\) and the escort map's amplification factor \(|\theta|\).

\section{Fixed point and convergence}\label{sec:fixed_point}

The proof uses Hilbert's projective metric. For $x,y\in\R_{>0}^n$ set
\begin{equation}\label{eq:Hilbert_def}
d_H(x,y)=\log\frac{\max_i x_i/y_i}{\min_i x_i/y_i}.
\end{equation}
It ignores scale and therefore fits the normalized problem. Since $\mathbf L_\pi$ is strictly positive, its projective diameter
\[
\Delta_\pi=\log\max_{i,j,k,\ell}
\frac{(\mathbf L_\pi)_{ik}(\mathbf L_\pi)_{j\ell}}
     {(\mathbf L_\pi)_{i\ell}(\mathbf L_\pi)_{jk}}
\]
is finite. Birkhoff's theorem~\cite{birkhoff1957,bushell1973} gives the contraction factor
\[
\tilde q=\tanh(\Delta_\pi/4)\in[0,1)
\]
(or any larger valid Birkhoff factor). The escort map scales projective distance exactly:
\begin{equation}\label{eq:escort_scaling}
d_H(P_\theta(u),P_\theta(v))=|\theta|\,d_H(u,v).
\end{equation}
The identity follows directly. The componentwise power sends each ratio $u_i/v_i$ to $(u_i/v_i)^\theta$, and the normalization cancels from $d_H$. For $\theta<0$, the maximal and minimal ratios exchange roles, which accounts for the modulus. Consequently,
\begin{equation}\label{eq:T_contraction}
d_H(Tx,Ty)\le |\theta|\tilde q\,d_H(x,y),\qquad x,y\in\Sn^\circ .
\end{equation}
Because $\mathbf L_\pi>0$, the map $T$ extends continuously to the compact simplex $\Sn$ and takes values in $\Sn^\circ$. Brouwer's theorem therefore gives at least one fixed point for every $\theta$. When $q:=|\theta|\tilde q<1$, the \emph{frozen-response map} is a strict contraction and its fixed point is unique. In particular, uniqueness holds throughout $|\theta|\le1$. This result concerns $T$ itself. Convergence of the original finite-memory recursion needs one further, explicit small-gain condition.

The actual recursion \eqref{eq:u_recursion} is not the Picard iteration of $T$. At time $t$, the state $u_t$ retains the response to earlier injection profiles, whereas $T$ uses the response obtained by freezing the current profile indefinitely. Although the filter is stable because $a<1$, its tracking error still has to be controlled.

For a fully explicit sufficient condition, choose $k_0\ge1$ with $\mathbf A^{k_0}>0$ and set
\[
\underline u=bM_0a^{k_0-1}
\min_{i,j}(\mathbf A^{k_0})_{ij},
\qquad
\ell_\pi=\min_{i,j}(\mathbf L_\pi)_{ij},
\qquad
m_U=\min\{\underline u,\ell_\pi\}.
\]
After $k_0$ steps, both the rescaled state and its frozen response have every coordinate at least $m_U$ and total mass $M_0$. Define
\[
C_U=\frac{2}{m_U},
\qquad
D_W=2\log\frac{M_0}{\ell_\pi},
\qquad
C_W=M_0\frac{e^{D_W}-1}{D_W},
\]
where $C_W=M_0$ when $D_W=0$. These deliberately conservative constants arise from two elementary comparisons between Hilbert's metric and the $\ell^1$ norm. We record both directions because each is used below.

\begin{lemma}[Metric comparison]\label{lem:metric_comparison}
Let $x,y\in\R_{>0}^n$.
\emph{(i)} If $x\ge m\1$ and $y\ge m\1$ for some $m>0$, then
\[
d_H(x,y)\le\frac{2}{m}\,\|x-y\|_1 .
\]
\emph{(ii)} If $\1^\top x=\1^\top y=M_0$ and $x,y\ge\ell\,\1$ for some $\ell>0$, then $d_H(x,y)\le D$ with $D=2\log(M_0/\ell)$, and
\[
\|x-y\|_1\le M_0\,\frac{e^{D}-1}{D}\,d_H(x,y).
\]
\end{lemma}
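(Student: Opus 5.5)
Both parts reduce to elementary estimates on the logarithms of coordinate ratios. Throughout we write $r_i=\log(x_i/y_i)$, so that $d_H(x,y)=\max_i r_i-\min_i r_i$.

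\emph{Part (i).} For any $t,s\ge m$ the mean value theorem gives $|\log t-\log s|\le|t-s|/m$. Hence $|r_i|\le|x_i-y_i|/m$ for every $i$. Let $i^*$ attain the maximum of $r_i$ and $j^*$ the minimum. Then
\[
d_H(x,y)=r_{i^*}-r_{j^*}\le |r_{i^*}|+|r_{j^*}|\le\frac{|x_{i^*}-y_{i^*}|+|x_{j^*}-y_{j^*}|}{m}.
\]
The right-hand side is at most $2\|x-y\|_1/m$. This is crude, since the two terms already sum to at most $\|x-y\|_1$ when $i^*\ne j^*$ and the constant $1/m$ would suffice, but it matches the stated constant $2/m$.

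\emph{Part (ii), first claim.} Each coordinate satisfies $\ell\le x_i\le M_0$, because the entries are nonnegative and sum to $M_0$; the same holds for $y$. Therefore $x_i/y_i\in[\ell/M_0,\,M_0/\ell]$. The max/min ratio is then at most $(M_0/\ell)^2$, which gives $d_H\le D$.

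\emph{Part (ii), main inequality.} Equal total mass forces the ratios $x_i/y_i$ to straddle $1$. Indeed, $\sum_i y_i(x_i/y_i-1)=0$ with positive weights $y_i$, so $\min_i r_i\le0\le\max_i r_i$. Consequently $|r_i|\le d_H=:d$ for all $i$. Using $|e^{r}-1|\le e^{|r|}-1$, we obtain
\[
\|x-y\|_1=\sum_i y_i\,|e^{r_i}-1|\le\sum_i y_i\,(e^{d}-1)=M_0(e^d-1).
\]
It remains to convert $e^d-1$ into a multiple of $d$. The function $g(s)=(e^s-1)/s$ is increasing on $(0,\infty)$, with $g(0)=1$ by continuity. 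Since $d\le D$ by the first claim, $e^d-1=d\,g(d)\le d\,g(D)$. This yields $\|x-y\|_1\le M_0\,\frac{e^D-1}{D}\,d_H(x,y)$, and the case $D=0$ is read via $g(0)=1$.

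The only step needing care is the straddling observation $\min_i r_i\le0\le\max_i r_i$. It converts the Hilbert seminorm, which is blind to scale, into a genuine sup-norm bound on the log-ratios, and it depends on the equal-mass normalization. The monotonicity of $g$ is routine: one checks that $se^s-(e^s-1)\ge0$, which holds because this expression vanishes at $s=0$ and has derivative $se^s\ge0$. The lower bound $\ell$ enters only through $D$, and part (i) needs no normalization at all.
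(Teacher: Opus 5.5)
Your proof is correct and follows the paper's argument essentially step for step: the mean-value bound $|\log(x_i/y_i)|\le|x_i-y_i|/m$ for (i), and for (ii) the coordinate box giving $d_H\le D$, the equal-mass straddling $\min_i r_i\le 0\le\max_i r_i$, and the monotonicity of $(e^s-1)/s$. Your side remark that the constant in (i) can be sharpened to $1/m$ is also correct, since $d_H=0$ when the extremal indices coincide; the sharper constant is not needed for the lemma.
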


\begin{proof}
(i) For each $i$, $|\log(x_i/y_i)|\le|x_i-y_i|/\min(x_i,y_i)\le\|x-y\|_1/m$, and $d_H(x,y)\le2\max_i|\log(x_i/y_i)|$.

(ii) Every coordinate of $x$ is at most $M_0$ and every coordinate of $y$ is at least $\ell$, so each ratio $r_i=x_i/y_i$ lies in $[\ell/M_0,M_0/\ell]$ and $d_H(x,y)\le D$. Equality of the total masses forces $\min_i r_i\le1\le\max_i r_i$, hence $r_i\in[e^{-d},e^{d}]$ with $d=d_H(x,y)$, and $|r_i-1|\le e^{d}-1\le[(e^{D}-1)/D]\,d$ because $t\mapsto(e^t-1)/t$ is increasing on $(0,\infty)$. Multiplying by $y_i$ and summing over $i$ gives the claim.
\end{proof}

Part (i) will be applied with $m=m_U$ and part (ii) with $\ell=\ell_\pi$; these choices produce the constants $C_U$ and $C_W$. With $q=|\theta|\tilde q$, put
\begin{equation}\label{eq:small_gain_matrix}
\mathsf M=
\begin{pmatrix}
q & |\theta|C_Ua\\[2pt]
C_W\tilde q(1+q) & a\bigl(1+|\theta|C_UC_W\tilde q\bigr)
\end{pmatrix}.
\end{equation}

\begin{theorem}[Convergence of the injection profile]\label{thm:share_convergence}
Let Assumption~\ref{ass:A_primitive} hold. Fix $\pi>0$ and $\theta\in\R$. If $\theta<0$, assume $m_0\in\R_{>0}^n$. If $\theta=0$, then $\gamma_t=n^{-1}\1$ for every $t$. For $\theta\ne0$, suppose $q=|\theta|\tilde q<1$ and $\rho(\mathsf M)<1$. Then $\gamma_t\in\Sn^\circ$ after a finite burn-in and converges geometrically in Hilbert's metric to the unique fixed point $\gstar\in\Sn^\circ$ of $T$:
\[
\gamma_t\longrightarrow \gstar,
\qquad
\gstar=P_\theta(\mathbf L_\pi\gstar)=P_\theta(\mathbf Q\gstar).
\]
The limit is independent of the admissible initial mass vector.
\end{theorem}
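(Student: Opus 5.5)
The plan is a two-dimensional small-gain argument. I track two error quantities and show they obey a linear recursive inequality whose matrix is entrywise dominated by $\mathsf M$. Write $w_t=\mathbf L_\pi\gamma_t$ for the frozen response to the current injection profile. The two quantities are
\[
e_t=d_H(\gamma_t,\gstar),\qquad f_t=\|u_t-w_{t-1}\|_1 .
\]
Here $f_t$ is the tracking error of the finite-memory filter. The case $\theta=0$ is immediate, since $P_0\equiv n^{-1}\1$.

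For $\theta\neq0$, I first fix the target. Existence and uniqueness of $\gstar\in\Sn^\circ$ follow from \eqref{eq:T_contraction} and Brouwer, since $q<1$.

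\textbf{Step 1: burn-in.} Unrolling \eqref{eq:u_recursion} gives $u_t=a^t\mathbf A^t u_0+bM_0\sum_{s<t}a^{t-1-s}\mathbf A^{t-s}\gamma_s$. For $t\ge k_0$, the summand with $t-1-s=k_0-1$ alone is at least $\underline u\,\1$. Hence $u_t\ge m_U\1$ and $\1^\top u_t=M_0$. Also $w_t\ge\ell_\pi\1\ge m_U\1$ always, with total mass $M_0$. Thus Lemma~\ref{lem:metric_comparison}(i) applies with $m=m_U$ to pairs among $\{u_t,w_s\}$, and part (ii) applies with $\ell=\ell_\pi$ to pairs among the $w_s$. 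In particular $\gamma_t\in\Sn^\circ$ for $t\ge k_0$. For $\theta<0$ this holds from the start, since $m_0>0$.

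\textbf{Step 2: two recursive estimates.} Write $w_\star=\mathbf L_\pi\gstar$.

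\emph{Profile error.} By \eqref{eq:escort_scaling} and the triangle inequality,
\[
e_{t+1}=|\theta|\,d_H(u_{t+1},w_\star)\le|\theta|\,d_H(u_{t+1},w_t)+|\theta|\,d_H(w_t,w_\star).
\]
Birkhoff's theorem bounds the second term: $d_H(w_t,w_\star)\le\tilde q\,e_t$. For the first term, Lemma~\ref{lem:metric_comparison}(i) gives $d_H(u_{t+1},w_t)\le C_U\|u_{t+1}-w_t\|_1$.

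\emph{Tracking error.} The filter identity $w=a\mathbf Aw+bM_0\mathbf A\gamma$ (valid for $w=\mathbf L_\pi\gamma$) gives
\[
u_{t+1}-w_t=a\mathbf A(u_t-w_t).
\]
Since $\mathbf A$ is column-stochastic, $\|u_{t+1}-w_t\|_1\le a\bigl(f_t+\|w_{t-1}-w_t\|_1\bigr)$. By Lemma~\ref{lem:metric_comparison}(ii) and Birkhoff,
\[
\|w_{t-1}-w_t\|_1\le C_W\tilde q\,d_H(\gamma_{t-1},\gamma_t)\le C_W\tilde q\,(e_{t-1}+e_t).
\]

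\textbf{Step 3: closing the system.} The index shift is awkward. I will choose the pair of tracked quantities (for instance $\|u_{t+1}-w_t\|_1$ together with $e_t$, possibly bounding $e_{t-1}$ by substituting the previous $e$-inequality, which is where the factor $1+q$ comes from) so that the inequalities close as
\[
(e_{t+1},g_{t+1})^\top\le\mathsf M\,(e_t,g_t)^\top
\]
componentwise, with $g_t$ the chosen tracking variable. The form of $\mathsf M$ suggests that the second row arises from inserting the first-row bound for $e_{t+1}$ into the $\|w_{t+1}-w_t\|_1$ estimate. That produces $C_W\tilde q(1+q)$ and the cross term $a|\theta|C_UC_W\tilde q$.

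The main obstacle is this bookkeeping. It must produce exactly the entries of $\mathsf M$, with nonnegative coefficients, using only the bounds of Step 1 after burn-in.

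\textbf{Step 4: conclusion.} Once the inequality is established, $\mathsf M\ge0$ and $\rho(\mathsf M)<1$ give $\|\mathsf M^t\|\le C r^t$ for any $r\in(\rho(\mathsf M),1)$. Hence $e_t\to0$ geometrically. The initial values $e_{k_0},g_{k_0}$ are finite by Step 1, since all vectors involved are strictly positive. The limit $\gstar$ is the unique fixed point of $T$, so it does not depend on $m_0$.
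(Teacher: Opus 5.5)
Your plan uses every ingredient of the paper's own proof:
\begin{itemize}
\item the pair of Hilbert profile error and $\ell^1$ filter-tracking error;
\item the burn-in after $k_0$ steps;
\item the filter identity $u_{t+1}-w_t=a\mathbf A(u_t-w_t)$;
\item both parts of Lemma~\ref{lem:metric_comparison}, the Birkhoff contraction of $\mathbf L_\pi$, and the exact escort scaling.
\end{itemize}
The gap is Step~3, which you announce but do not carry out. Closing the two-dimensional inequality is the actual content of the theorem, so the proposal as written does not yet prove it.

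The index shift you run into comes from your lagged variable $f_t=\|u_t-w_{t-1}\|_1$. The paper tracks $\delta_t=\|u_t-w_t\|_1$ instead. Then $\|u_{t+1}-w_t\|_1\le a\delta_t$, so your profile estimate becomes $e_{t+1}\le q e_t+|\theta|C_Ua\,\delta_t$, which is the first row of $\mathsf M$. The triangle inequality, Lemma~\ref{lem:metric_comparison}(ii) and Birkhoff then give
\[
\delta_{t+1}\le a\delta_t+\|w_{t+1}-w_t\|_1\le a\delta_t+C_W\tilde q\,\bigl(e_{t+1}+e_t\bigr).
\]
Now insert the first-row bound for $e_{t+1}$, exactly as you guessed. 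The paper writes the same bound as $d_H(\gamma_{t+1},\gamma_t)\le d_H(\gamma_{t+1},T\gamma_t)+(1+q)e_t$, passing through $T\gamma_t$ and $\gstar$. Either way,
\[
\delta_{t+1}\le C_W\tilde q(1+q)\,e_t+a\bigl(1+|\theta|C_UC_W\tilde q\bigr)\delta_t ,
\]
which is the second row of $\mathsf M$. So the factor $1+q$ comes from eliminating the later profile error in favour of the earlier one, not from bounding $e_{t-1}$.

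Your own variable also closes. Take the state $(e_t,f_{t+1})$ and make the same substitution. You get $(e_{t+1},f_{t+2})^\top\le D\mathsf M D^{-1}(e_t,f_{t+1})^\top$ with $D=\mathrm{diag}(1,a)$, reflecting $f_{t+1}\le a\delta_t$. This nonnegative matrix has the same spectral radius as $\mathsf M$. Your stated obstacle of reproducing the entries of $\mathsf M$ exactly is therefore not needed. Only nonnegativity, so that the componentwise inequality can be iterated, and $\rho<1$ matter.

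The rest of your proposal matches the paper: the burn-in, positivity for $\theta<0$, finiteness of the initial errors, geometric decay, and independence of the limit from $m_0$.
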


\begin{proof}
Write $w^t=\mathbf L_\pi\gamma_t$ and $\delta_t=\|u_t-w^t\|_1$. Unrolling \eqref{eq:u_recursion} over $k_0$ steps gives
\[
u_{t+k_0}=a^{k_0}\mathbf A^{k_0}u_t+bM_0\sum_{s=0}^{k_0-1}a^s\mathbf A^{s+1}\gamma_{t+k_0-1-s},
\]
and the oldest injection term alone is $bM_0a^{k_0-1}\mathbf A^{k_0}\gamma_t\ge\underline u\,\1$, because $\mathbf A^{k_0}z\ge\min_{i,j}(\mathbf A^{k_0})_{ij}\,\1$ for every $z\in\Sn$. Hence $u_t\ge\underline u\,\1$ for $t\ge k_0$; also $w^t\ge\ell_\pi\,\1$ for every $t$. Both vectors have total mass $M_0$ (the recursion preserves $\1^\top u_t=M_0$ because $a+b=1$), so after burn-in they lie in the same compact positive coordinate box. For $t\ge k_0$, set $e_t=d_H(\gamma_t,\gstar)$.

Because $w^t$ is the fixed point of the frozen filter, $w^t=a\mathbf Aw^t+bM_0\mathbf A\gamma_t$; subtracting this from \eqref{eq:u_recursion} gives $u_{t+1}-w^t=a\mathbf A(u_t-w^t)$, and $\|\mathbf A\|_{1\to1}=1$ then gives
\begin{equation}\label{eq:delta_first_bound}
\delta_{t+1}\le a\delta_t+\|w^{t+1}-w^t\|_1.
\end{equation}
The vectors $w^{t+1}$ and $w^t$ have mass $M_0$ and coordinates at least $\ell_\pi$, so Lemma~\ref{lem:metric_comparison}(ii), applied with $\ell=\ell_\pi$ and combined with the Birkhoff contraction of $\mathbf L_\pi$, yields
\begin{equation}\label{eq:w_motion_bound}
\|w^{t+1}-w^t\|_1\le C_W\tilde q\,d_H(\gamma_{t+1},\gamma_t).
\end{equation}
Next, $\gamma_{t+1}=P_\theta(u_{t+1})$ and $T(\gamma_t)=P_\theta(w^t)$, so the exact scaling \eqref{eq:escort_scaling} gives $\varepsilon_t:=d_H\bigl(\gamma_{t+1},T(\gamma_t)\bigr)=|\theta|\,d_H(u_{t+1},w^t)$. For $t\ge k_0$ both arguments have coordinates at least $m_U$, while $\|u_{t+1}-w^t\|_1\le a\delta_t$, so Lemma~\ref{lem:metric_comparison}(i) gives
\[
\varepsilon_t\le |\theta|C_Ua\,\delta_t,
\]
and the contraction of $T$ gives
\begin{equation}\label{eq:e_tracking_bound}
e_{t+1}\le q e_t+|\theta|C_Ua\,\delta_t.
\end{equation}
Finally, the triangle inequality gives
\[
d_H(\gamma_{t+1},\gamma_t)
\le\varepsilon_t+(1+q)e_t.
\]
Combining this with \eqref{eq:delta_first_bound}--\eqref{eq:e_tracking_bound} gives, for $t\ge k_0$,
\[
\binom{e_{t+1}}{\delta_{t+1}}
\le \mathsf M\binom{e_t}{\delta_t}.
\]
The non-negative matrix $\mathsf M$ has spectral radius below one, so its powers decay geometrically. Hence $e_t\to0$ geometrically. Convergence in Hilbert distance implies convergence in $\ell^1$ on the simplex (two probability vectors satisfy $\|x-y\|_1\le e^{d_H(x,y)}-1$, by the mass-balance step in the proof of Lemma~\ref{lem:metric_comparison}), which proves the claim.
\end{proof}

\begin{remark}
The condition $\rho(\mathsf M)<1$ is sufficient, not sharp. It separates two issues that are sometimes blurred. The inequality $q<1$ establishes uniqueness and global attraction for the frozen-response map. The second condition ensures that the physical filter tracks that map closely enough. The test is non-vacuous: for rank-one mixing $\mathbf A=v\1^\top$, one has $\tilde q=0$ and $\rho(\mathsf M)=a<1$. The same remains true under sufficiently small perturbations at fixed parameters. In numerical examples, the fixed point can remain locally stable beyond the certified region. Online Resource~1 (Sec.~S2) gives a checkable two-by-two form of $\rho(\mathsf M)<1$ and a modular version of the proof.
\end{remark}

The two forcing limits describe different regimes. For large $\pi$, the filter is short and tracking is comparatively easy, but the response is close to a one-step image. When $\pi$ is small relative to the mixing gap, the filter has a long memory and the response is closer to the Perron profile. The $\pi$-free benchmark used below is accurate in the latter regime, whose limit is stated precisely in Online Resource~1 (Sec.~S3). The heavy-tail statements concern the selected fixed point along graph sequences and are logically separate from the finite-dimensional convergence result.

Two simple checks help fix ideas. First, if the network response is also row-stochastic, then the uniform vector is \emph{right} fixed by $\mathbf Q$:
\[
\mathbf Q\frac1n\1=\frac1n\1,
\qquad
P_\theta\!\left(\frac1n\1\right)=\frac1n\1.
\]
Within the unique contracting regime, a doubly stochastic response therefore selects the uniform profile. No hub or peripheral preference is generated unless the response field itself is heterogeneous. Once contraction has been lost, doubly stochasticity alone does not exclude non-uniform fixed points. Second, the negative branch is not a separate model. The same fixed-point equation and Hilbert argument apply after reversing the power. Positivity of the initial mass is added only to keep negative powers well defined.

For the rest of the paper we work in the regime in which the selected fixed point is attracting. This keeps the condensation question separate from the different problem of what happens after a fixed point loses stability.

Several identities give the fixed point a direct physical interpretation. Let
\begin{equation}\label{eq:y_star_def}
y_\star=\mathbf Q\gstar .
\end{equation}
Under the hypotheses of Theorem~\ref{thm:share_convergence}, $y_\star$ is the limit of the normalized standing stock $x_t$, whereas $\gstar$ is the limiting injection profile. For $\theta\ne0$,
\begin{equation}\label{eq:escort_duality}
\gstar=\frac{y_\star^{\circ\theta}}{\1^\top y_\star^{\circ\theta}},
\qquad
y_\star=\frac{\gstar^{\circ(1/\theta)}}{\1^\top \gstar^{\circ(1/\theta)}}.
\end{equation}
The same fixed point has a standard resolvent interpretation. Define the pre-transport profile
\begin{equation}\label{eq:rstar_pagerank_main}
r_\star=ay_\star+b\gstar.
\end{equation}
Since $y_\star=\mathbf A r_\star$, it satisfies
\begin{equation}\label{eq:pagerank_balance_main}
r_\star=a\mathbf A r_\star+b\gstar,
\qquad
\gstar=P_\theta(\mathbf A r_\star).
\end{equation}
The first identity is personalized PageRank with personalization vector $\gstar$. The second closes the model by making that vector endogenous. This also makes the relation to fixed-personalization PageRank precise. At $\theta=1$, $r_\star=y_\star=\gstar$ is the Perron profile, whereas for $\theta\ne1$ the power normalization produces a genuinely nonlinear reset.
Equivalently,
\begin{equation}\label{eq:escort_perron}
\frac{(\mathbf Q\gstar)_i}{\gstar{}_i^{1/\theta}}\equiv \alpha_\star,
\qquad i=1,\ldots,n .
\end{equation}
This is the nonlinear Perron equation behind the model. For $\theta<0$ the map is order reversing, so the terminology is used in a projective fixed-point sense rather than as an order-preserving Perron--Frobenius theorem. At $\theta=1$ it reduces to the ordinary Perron problem for $\mathbf Q$. At $\theta=0$ the escort map becomes uniform and the fixed point is $n^{-1}\1$.

\section{Reverse kernel and linear response}\label{sec:reverse}

The fixed point determines an endogenous Markov kernel, built from the network response and the selected forcing profile rather than imposed externally. Define
\begin{equation}\label{eq:Kstar_def}
(\mathbf K_\star)_{ij}=\frac{\mathbf Q_{ij}\gstar{}_j}{y_{\star,i}},
\qquad y_\star=\mathbf Q\gstar .
\end{equation}
For each $i$, its row sum is
\[
\sum_j(\mathbf K_\star)_{ij}=\frac{(\mathbf Q\gstar)_i}{y_{\star,i}}=1,
\]
so $\mathbf K_\star$ is a strictly positive row-stochastic matrix. It is the Bayes reverse kernel associated with the pair $(\gstar,y_\star)$. Conditional on observing mass at response node $i$, $(\mathbf K_\star)_{ij}$ is the probability that injection node $j$ contributed it. As required by this forward--reverse interpretation, $y_\star^\top\mathbf K_\star=\gstar^\top$.

The same kernel controls local relaxation. In tangent coordinates, write a perturbation $h\in\R^n$ of the injection profile as $h=\gstar\odot\xi$, with $\1^\top h=0$. The calculation in Online Resource~1 (Sec.~S5) gives
\begin{equation}\label{eq:linearized_deflated}
D T(\gstar)h
=\theta\,\gstar\odot (I-\1\gstar^\top)\mathbf K_\star\xi .
\end{equation}
The deflated operator
\begin{equation}\label{eq:deflated_operator}
\mathcal R_\star=(I-\1\gstar^\top)\mathbf K_\star
\end{equation}
therefore determines the small perturbations that preserve total mass. In a symmetrizable special case the tangent modes are real. On a generic directed network, however, $\mathcal R_\star$ can have complex conjugate modes, which permit damped rotations and, after stability is lost, cyclic behavior.

The formula also clarifies the role of $\theta$. At a fixed point of the frozen-response map, the local multipliers are $\theta$ times the tangent eigenvalues of $\mathcal R_\star$. Along a fixed-point branch, both this explicit factor and $\mathcal R_\star$ vary with $\theta$. Even so, increasing $|\theta|$ can amplify a mode beyond the unit circle. On a directed network, the crossing can occur through a complex conjugate pair. Under the usual non-degeneracy conditions, this gives a Neimark--Sacker transition of the map, the discrete-time analogue of a Hopf bifurcation~\cite{kuznetsov2004,skardal2020}. The numerical illustration below concerns the frozen-response map. None of the heavy-tail results depends on this bifurcation picture.

These endpoint limits are pointwise. As $\theta\to0$, the escort map loses memory of its argument, and $\gstar\to n^{-1}\1$. For any fixed positive response vector, its escort concentrates on the largest coordinates as $\theta\to+\infty$ and on the smallest coordinates as $\theta\to-\infty$. Along a fixed-point branch, this identifies the selected set only if the response profiles converge and the relevant extrema remain separated. The positive endpoint seeks hubs when the response is largest there. The negative endpoint seeks the periphery when peripheral nodes have the smallest response. Competing fixed points or cycles may intervene outside the contraction regime.

The following local calculation helps interpret the heavy-tail results. For a fixed positive vector $u$,
\begin{equation}\label{eq:small_theta_expansion_main}
P_\theta(u)_i
=\frac1n+\frac{\theta}{n}\left(\log u_i-\frac1n\sum_{j=1}^n\log u_j\right)+O(\theta^2),
\qquad \theta\to0 .
\end{equation}
The implicit-function theorem applies to the fixed-point equation at $(\theta,\gamma)=(0,n^{-1}\1)$ because its derivative with respect to $\gamma$ is the identity there. At the fixed point the preceding expansion therefore gives, to first order,
\begin{equation}\label{eq:weak_feedback_fixed_point}
\gstar{}_i
=\frac1n+\frac{\theta}{n}\left(\log (\mathbf Q\bar e)_i-\frac1n\sum_{j=1}^n\log (\mathbf Q\bar e)_j\right)+O(\theta^2),
\qquad
\bar e=\frac1n\1 .
\end{equation}
Weak feedback first produces a logarithmic response bias, not a condensate. Condensation, if it appears, is a stronger tail effect produced as the power map amplifies this bias.

Equation~\eqref{eq:weak_feedback_fixed_point} is often the simplest way to interpret data from a finite graph. The first observable response to feedback is not raw degree but the logarithm of the transported response \((\mathbf Q\bar e)_i\). On an undirected graph with mixing fast relative to forcing, this quantity is close to a degree-centrality field. It can differ markedly on a directed graph. A node with many outgoing links may receive little discounted response, while another may be fed strongly by a directed core. We therefore state the heavy-tail theory in terms of response profiles and specialize it to degree-class approximations only when those approximations are justified.

\section{Condensation and anti-condensation on heavy-tailed networks}\label{sec:heavy_tail}

The fixed-point equation is general, but its heavy-tailed interpretation requires three notions to be kept separate. A profile may tilt toward high-degree nodes, have anomalous inverse-participation-ratio scaling, or place a macroscopic amount of mass on finitely many nodes. These events are not equivalent.

\subsection{The sign law}

First consider a two-block, one-step annealed reduction for an undirected network. There is a hub block $H$ with $N_H$ nodes and per-node degree $k_H$, and a peripheral block $P$ with $N_P$ nodes and per-node degree $k_P<k_H$. Let $m_{\alpha\beta}$ be the fraction of stubs from block $\alpha$ to block $\beta$, so that $(m_{\alpha\beta})$ is row-stochastic and the usual stub-balance relation $N_Hk_Hm_{HP}=N_Pk_Pm_{PH}$ holds. If $x=\gamma_H/\gamma_P$ is the per-node hub-to-periphery injection ratio, the response ratio is
\[
r(x)=\frac{\varphi_H(x)}{\varphi_P(x)}
=\frac{m_{HH}x+(k_H/k_P)m_{HP}}
{(k_P/k_H)m_{PH}x+m_{PP}},
\]
and the reduced fixed-point equation is
\begin{equation}\label{eq:two_block_fixed_point}
x=r(x)^\theta .
\end{equation}
At equal per-node injection the hub inflow advantage is
\begin{equation}\label{eq:g_inflow}
g=r(1)=\frac{m_{HH}+(k_H/k_P)m_{HP}}
        {(k_P/k_H)m_{PH}+m_{PP}}\ge1,
\end{equation}
with equality only in the block-decoupled case.

\begin{proposition}[Sign of the selected phase]\label{prop:twoblock_sign}
In the two-block reduction, assume $|\theta|<1$ and $g>1$. The positive fixed point of \eqref{eq:two_block_fixed_point} is unique and satisfies
\[
\operatorname{sign}(\log x_\star)=\operatorname{sign}(\theta).
\]
Moreover, for small $|\theta|$,
\[
\log x_\star=\theta\log g+O(\theta^2).
\]
Thus positive feedback selects a hub-biased profile, while negative feedback selects a peripheral-biased profile.
\end{proposition}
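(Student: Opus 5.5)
The plan is to pass to logarithmic coordinates, where the two-block map becomes a one-dimensional contraction. This is the Hilbert-metric picture of Section~\ref{sec:fixed_point} on the two-dimensional cone. Write $s=\log x$ and set
\[
f(s)=\theta\log r(e^s),\qquad h(s)=s-f(s).
\]
Positive fixed points of \eqref{eq:two_block_fixed_point} are exactly the zeros of $h$. The function $r$ is a linear-fractional map $r(x)=(\alpha x+\beta)/(\gamma x+\delta)$ with $\alpha=m_{HH}$, $\beta=(k_H/k_P)m_{HP}$, $\gamma=(k_P/k_H)m_{PH}$ and $\delta=m_{PP}$, all non-negative. We may assume the block-coupled case, since $g>1$; then $\beta,\gamma>0$. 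The case $\theta=0$ is trivial ($x_\star=1$), so take $\theta\neq0$.

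The first step is the derivative bound
\[
\frac{d}{ds}\log r(e^s)=\frac{(\alpha\delta-\beta\gamma)\,x}{(\alpha x+\beta)(\gamma x+\delta)},\qquad x=e^s.
\]
Expanding the denominator gives $\alpha\gamma x^2+(\alpha\delta+\beta\gamma)x+\beta\delta$. Because $\beta\gamma x>0$ and the remaining terms are non-negative, this strictly exceeds $|\alpha\delta-\beta\gamma|\,x$. Hence $|d\log r/ds|<1$ for every $s$. This is the two-dimensional instance of Birkhoff contraction, and it needs no sign information on the determinant $\alpha\delta-\beta\gamma=m_{HH}-m_{PH}$, which can have either sign. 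Consequently
\[
h'(s)\ge 1-|\theta|>0,
\]
so $h$ is strictly increasing. It also satisfies $h(s)\to\pm\infty$ as $s\to\pm\infty$, because $\log r$ is bounded: $r$ is continuous on $[0,\infty]$ and positive at both ends. Therefore $h$ has exactly one zero $s_\star=\log x_\star$, which gives existence and uniqueness.

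For the sign law, evaluate at $s=0$:
\[
h(0)=-\theta\log r(1)=-\theta\log g .
\]
Since $g>1$ and $h$ is strictly increasing, the root lies to the right of $0$ exactly when $h(0)<0$, that is, when $\theta>0$. It lies to the left exactly when $\theta<0$. This gives $\operatorname{sign}(s_\star)=\operatorname{sign}(\theta)$.

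For the expansion, the lower bound on $h'$ and the identity $h(s_\star)=0$ give
\[
|s_\star|\le\frac{|h(0)|}{1-|\theta|}=\frac{|\theta|\log g}{1-|\theta|}.
\]
The fixed-point identity then yields
\[
s_\star-\theta\log g=\theta\bigl(\log r(e^{s_\star})-\log r(1)\bigr).
\]
By the mean value theorem and the derivative bound, the right-hand side is at most $|\theta|\,|s_\star|\le\theta^2\log g/(1-|\theta|)$ in absolute value, which is $O(\theta^2)$.

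The only step that needs care is the derivative bound, because the Möbius map may be increasing or decreasing depending on whether $m_{HH}\gtrless m_{PH}$. The argument above avoids this by working with $|d\log r/ds|$ directly, so no case split is required.
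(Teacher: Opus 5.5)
Your proof is correct. Its first step matches the paper's: the paper also passes to $\ell=\log x$ and shows $x r'(x)/r(x)\in[-1,1]$, writing it as a difference $m_{HH}x/\varphi_H(x)-\kappa^{-1}m_{PH}x/\varphi_P(x)$ of two numbers in $[0,1]$ where you use the M\"obius determinant. It then concludes that $\ell\mapsto\theta\log r(e^\ell)$ is a contraction of the line.

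The routes differ for the sign law and the expansion. The paper proves the pointwise inequalities $r(x)>x$ on $(0,1)$ and $r(x)>1/x$ on $(1,\infty)$ by explicit polynomial computations that use row-stochasticity of $(m_{\alpha\beta})$ and $\kappa=k_H/k_P>1$. It then rules out a fixed point on the wrong side of $1$ by a case split on whether $r(x)\ge1$. Its expansion comes from the implicit-function theorem at $(\theta,\ell)=(0,0)$. You read the sign directly off $h(0)=-\theta\log g$ and the monotonicity of $h$. This is shorter, and it shows the sign law needs only $|d\log r/ds|\le1$, $|\theta|<1$ and $r(1)=g>1$. Your mean-value argument also gives the explicit estimate $|\log x_\star-\theta\log g|\le\theta^2\log g/(1-|\theta|)$ on the whole range $|\theta|<1$, rather than an asymptotic $O(\theta^2)$ with unspecified constant. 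In exchange, the paper's route records pointwise information about the response ratio $r$ itself, which your argument shows is not needed for the sign law.

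Two auxiliary claims should be corrected, though neither affects your conclusions.

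\emph{The strict derivative bound.} The bound $|d\log r/ds|<1$ fails when $m_{HH}=m_{PP}=0$, for instance a star. There $r(x)=\kappa^2/x$ and $d\log r/ds\equiv-1$. Your justification works when $\alpha\delta\ge\beta\gamma$. When $\beta\gamma>\alpha\delta$, however, the term $\beta\gamma x$ is used up in dominating $|\alpha\delta-\beta\gamma|\,x$, and strictness then needs $\alpha>0$ or $\delta>0$. You only ever use the non-strict bound, so $h'\ge1-|\theta|$ and your mean-value step still hold.

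\emph{Boundedness of $\log r$.} $\log r$ is unbounded when $m_{HH}=0$ (then $r(x)\to0$ as $x\to\infty$) or when $m_{PP}=0$ (then $r(x)\to\infty$ as $x\to0^+$). So that justification of $h(s)\to\pm\infty$ is wrong. Instead, integrate $h'\ge1-|\theta|$ to get $h(s)\ge h(0)+(1-|\theta|)s$ for $s\ge0$ and $h(s)\le h(0)+(1-|\theta|)s$ for $s\le0$.
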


\begin{proof}
Put $\kappa=k_H/k_P>1$, so that $\varphi_H(x)=m_{HH}x+\kappa m_{HP}$ and $\varphi_P(x)=\kappa^{-1}m_{PH}x+m_{PP}$. In the log-coordinate $\ell=\log x$, the right-hand side of \eqref{eq:two_block_fixed_point} is $F(\ell)=\theta\log r(e^\ell)$. Since
\[
\frac{x\,r'(x)}{r(x)}
=\frac{m_{HH}x}{\varphi_H(x)}-\frac{\kappa^{-1}m_{PH}x}{\varphi_P(x)}\in[-1,1],
\]
one has $|F'(\ell)|\le |\theta|<1$, so $F$ is a contraction of the real line and the positive fixed point exists and is unique. When $\theta=0$, the fixed point is $x_\star=1$, so the sign statement holds. For the remainder of the sign argument suppose $\theta\ne0$. For $0<x<1$,
\[
\varphi_H(x)-x\varphi_P(x)
=m_{HP}(\kappa-x)+m_{PH}\left(x-\frac{x^2}{\kappa}\right)>0,
\]
unless the blocks are decoupled. Hence $r(x)>x$. If $0<\theta<1$, then $r(x)^\theta>x$ whether $r(x)<1$ or $r(x)\ge1$, so a fixed point cannot lie below one. For $x>1$, let $f(x)=x\varphi_H(x)-\varphi_P(x)$. Then $f(1)=m_{HP}(\kappa-1)+m_{PH}(1-\kappa^{-1})\ge0$, while
\[
f'(x)=2m_{HH}x+\kappa m_{HP}-\kappa^{-1}m_{PH}
\ge \min\{2,\kappa\}-\kappa^{-1}>0
\qquad (x\ge1).
\]
Thus $f(x)>0$ and $r(x)>1/x$. If $-1<\theta<0$, then $r(x)^\theta<x$ whether $r(x)<1$ or $r(x)\ge1$, so a fixed point cannot lie above one. Since $g=r(1)>1$, $x=1$ is not a fixed point in either case, and the inequalities are strict. Finally, the implicit-function theorem applied to $\ell-\theta\log r(e^\ell)=0$ at $(\theta,\ell)=(0,0)$ gives $\log x_\star=\theta\log g+O(\theta^2)$.
\end{proof}

This reduction is not a pointwise theorem for every finite graph. Its value lies in isolating the sign mechanism. The network gives high-degree nodes an inflow advantage, and the escort map either preserves that ordering ($\theta>0$) or reverses it ($\theta<0$). The standard symmetric undirected degree-assortativity coefficient, denoted $\nu$, controls the strength of the advantage; $\nu>0$, $\nu<0$, and $\nu\approx0$ indicate assortative, disassortative, and near-neutral mixing. Disassortative mixing repeatedly routes peripheral injections toward the hub block and amplifies the tilt. Strong assortativity can suppress it by keeping the blocks nearly self-contained.

The same calculation extends to degree classes. In an annealed undirected degree-class approximation, let $P(k)$ be the empirical fraction of nodes of degree $k$. Let $m(k\to k')$ be a degree-mixing kernel satisfying $kP(k)m(k\to k')=k'P(k')m(k'\to k)$. Finally, let $\eta(k)$ be the per-node injection in class $k$, normalized by $\sum_k P(k)\eta(k)=1$. Then
\begin{equation}\label{eq:degree_class}
\eta(k)=\frac{\varphi(k)^\theta}{\sum_\ell P(\ell)\varphi(\ell)^\theta},
\qquad
\varphi(k)=k\sum_{k'}\frac{m(k\to k')}{k'}\,\eta(k') .
\end{equation}
Whenever the response $\varphi(k)$ increases with degree, the sign of $\theta$ again determines whether the fixed point favors the upper or lower end of the degree sequence. The phase is thus created by feedback, while network structure sets its strength and shape.

In the exactly uncorrelated annealed case,
\[
m(k\to k')=\frac{k'P(k')}{\langle k\rangle},
\]
the degree-class response is independent of the incoming profile:
\[
\varphi(k)=\frac{k}{\langle k\rangle}
\quad\hbox{and hence}\quad
\eta(k)=\frac{k^\theta}{\sum_\ell P(\ell)\ell^\theta}.
\]
The sum multiplying $k$ in \eqref{eq:degree_class} is
$\sum_{k'}P(k')\eta(k')/\langle k\rangle=1/\langle k\rangle$. This class-level response is rank one, so repeated transport and the discounted resolvent have the same normalized response at every forcing rate. In the annealed uncorrelated case, the benchmark used below is therefore the exact degree-class fixed point rather than an approximation.

Under mild disassortativity, suppose that the factor multiplying $k$ in $\varphi(k)$ varies slowly across the degree classes carrying most of the mass. Then
\begin{equation}\label{eq:mean_field_degree_law}
\eta(k)\approx \frac{k^\theta}{\sum_\ell P(\ell)\ell^\theta}.
\end{equation}
This is a mean-field law, not a pointwise theorem for a fixed finite graph. It shows why the heavy-tail exponent enters through moments of the degree distribution. Positive feedback samples the upper tail more strongly, whereas negative feedback samples the lower end. For this reason, the numerical section uses binned degree statistics. Node-by-node comparisons are much noisier and can obscure the mechanism.

\subsection{Bounds on the two phases}

The sign law identifies the direction of tilt, but not the degree of concentration. Let $v\in\Sn^\circ$ be a reference response profile, such as the Perron profile of the relevant network response, and consider the benchmark escort
\begin{equation}\label{eq:benchmark_profile_intro}
\gamma_\theta^\star=P_\theta(v).
\end{equation}
The benchmark does not replace the full fixed point. It provides a clean asymptotic model for the fast-mixing/slow-forcing regime and for degree-class calculations, and it gives sharp intuition at both endpoints.

A simple uniform comparison transfers the benchmark exponents to the full fixed point. This additional condition is needed when the graph itself changes with $n$.

\begin{proposition}[Transfer from the benchmark to the full fixed point]\label{prop:benchmark_transfer}
Let $\mathbf Q_n$ be positive and column-stochastic, let $v^{(n)}\in S_{n-1}^{\circ}$, and let $\gamma_\star^{(n)}=P_\theta(\mathbf Q_n\gamma_\star^{(n)})$. Suppose that constants $0<c\le C<\infty$, independent of $n$, satisfy
\begin{equation}\label{eq:uniform_response_comparison}
c\,v_i^{(n)}\le (\mathbf Q_nz)_i\le C\,v_i^{(n)}
\qquad (z\in S_{n-1},\ i=1,\ldots,n).
\end{equation}
For each fixed $\theta\in\mathbb R$, with $K=(C/c)^{|\theta|}$,
\begin{equation}\label{eq:profile_transfer_bound}
K^{-1}P_\theta(v^{(n)})_i
\le \gamma_{\star,i}^{(n)}
\le K P_\theta(v^{(n)})_i .
\end{equation}
In particular,
\begin{equation}\label{eq:ipr_transfer_bound}
K^{-2}\IPR(P_\theta(v^{(n)}))
\le \IPR(\gamma_\star^{(n)})
\le K^2\IPR(P_\theta(v^{(n)})),
\end{equation}
so the two profiles have the same IPR exponent whenever the comparison is uniform in $n$.
\end{proposition}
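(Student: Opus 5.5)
The plan is to apply the comparison \eqref{eq:uniform_response_comparison} at the single point $z=\gamma_\star^{(n)}\in S_{n-1}$, then push it through the escort map coordinatewise. The key observation is that \eqref{eq:uniform_response_comparison} is uniform in $z$. We therefore do not need any information about $\gamma_\star^{(n)}$ beyond its being a probability vector: existence follows from the Brouwer argument of Section~\ref{sec:fixed_point}, and uniqueness is not required. Writing $y=\mathbf Q_n\gamma_\star^{(n)}$, we get $c\,v_i\le y_i\le C\,v_i$ for all $i$ (dropping the superscript $n$).

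Next I treat the unnormalized powers. For $\theta\ge0$, monotonicity of $t\mapsto t^\theta$ gives $c^\theta v_i^\theta\le y_i^\theta\le C^\theta v_i^\theta$. For $\theta<0$ the inequalities reverse, giving $C^{\theta}v_i^\theta\le y_i^\theta\le c^{\theta}v_i^\theta$. In both cases
\[
\lambda\,v_i^\theta\le y_i^\theta\le \Lambda\,v_i^\theta,\qquad \Lambda/\lambda=(C/c)^{|\theta|}=K .
\]
Summing over $i$ gives the same two-sided bound for the normalizers $\1^\top y^{\circ\theta}$ versus $\1^\top v^{\circ\theta}$. Dividing the coordinatewise bounds by the normalizer bounds then yields
\[
\frac{\lambda}{\Lambda}P_\theta(v)_i\le P_\theta(y)_i=\gamma_{\star,i}^{(n)}\le\frac{\Lambda}{\lambda}P_\theta(v)_i ,
\]
which is exactly \eqref{eq:profile_transfer_bound}. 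Equivalently, this is the Hilbert-metric statement $d_H(y,v)\le\log(C/c)$ combined with the exact scaling \eqref{eq:escort_scaling}. Turning that metric bound into coordinatewise bounds for two probability vectors would, however, again require the same mass-balance step, so the direct computation is cleaner. The case $\theta=0$ is trivial, since $K=1$ and both profiles are uniform.

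For the IPR, I take $\IPR(x)=\sum_i x_i^2$ for $x\in S_{n-1}$, the standard definition assumed in the text. Squaring the coordinatewise bounds, which are between nonnegative quantities, gives $K^{-2}P_\theta(v)_i^2\le\gamma_{\star,i}^2\le K^2P_\theta(v)_i^2$. Summing over $i$ gives \eqref{eq:ipr_transfer_bound}. If $K$ is independent of $n$, then $\log\IPR$ of the two profiles differ by at most $2\log K=O(1)$. Hence any power-law exponent $\lim\log\IPR/\log n$ coincides, as does the dichotomy between vanishing and nonvanishing IPR.

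No step is genuinely hard. The only points needing care are the sign bookkeeping for $\theta<0$, where the roles of $c$ and $C$ swap but the ratio is still $(C/c)^{|\theta|}$, and the observation that the bound must be applied to the fixed point itself rather than to the benchmark. The latter is exactly why uniformity in $z$ is assumed.
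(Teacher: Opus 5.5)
Your proposal is correct and follows essentially the paper's own proof. Like the paper, you evaluate the uniform comparison at $z=\gamma_\star^{(n)}$, raise it to the power $\theta$ (reversing the inequalities when $\theta<0$), bound both the coordinate and the normalizing sum, take their ratio, and then square and sum to get the IPR bound, with your explicit check that $\Lambda/\lambda=(C/c)^{|\theta|}$ in both sign cases filling in the detail the paper's one-line argument leaves implicit.
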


\begin{proof}
Put $y=\mathbf Q_n\gamma_\star^{(n)}$. Raising \eqref{eq:uniform_response_comparison} to the power $\theta$, and reversing the two scalar inequalities when $\theta<0$, bounds both $y_i^\theta$ and its normalizing sum by the corresponding quantities for $v^{(n)}$. Their ratio gives \eqref{eq:profile_transfer_bound}; squaring and summing gives \eqref{eq:ipr_transfer_bound}.
\end{proof}

Condition~\eqref{eq:uniform_response_comparison} holds, for example, if
\[
\sup_{z\in S_{n-1}}\max_i\left|\frac{(\mathbf Q_nz)_i}{v_i^{(n)}}-1\right|
\le\varepsilon<1
\]
uniformly in $n$, with $c=1-\varepsilon$ and $C=1+\varepsilon$. The fixed-matrix slow-forcing estimate in Online Resource~1 gives an absolute error bound. Obtaining this relative bound along a graph sequence additionally requires control of mixing and of the smallest response coordinate. No such uniformity is assumed for arbitrary sparse or modular graph sequences.

At fixed forcing, this restriction matters. For an undirected random-walk matrix, $v_i=d_i/\sum_\ell d_\ell$ and $\mathbf Q\ge b\mathbf A$ entrywise. Along every edge $j\to i$,
\begin{equation}\label{eq:sparse_transfer_obstruction}
\frac{\mathbf Q_{ij}}{v_i}
\ge \frac{b\sum_\ell d_\ell}{d_i d_j}.
\end{equation}
Consequently, a bounded-average-degree graph sequence containing an edge with $d_i d_j=o(n)$ cannot satisfy the uniform upper comparison at fixed $\pi$. Proposition~\ref{prop:benchmark_transfer} is therefore a rapid-mixing or slow-forcing graph-sequence result, possibly with $\pi=\pi_n\downarrow0$. It is not a fixed-$\pi$ theorem for arbitrary quenched sparse graphs. The exact uncorrelated annealed calculation is a separate case: its normalized response is rank one, so the obstruction does not arise at degree-class level.

Before imposing tail assumptions, one can obtain a non-asymptotic envelope. Let
\[
R(v)=\frac{\max_i v_i}{\min_i v_i}.
\]
For \(\gamma^\star_\theta=P_\theta(v)\),
\begin{equation}\label{eq:dynamic_range_envelope}
d_H\!\left(\gamma^\star_\theta,\frac1n\1\right)=|\theta|\log R(v),
\qquad
\frac{R(v)^{-|\theta|}}{n}\le \gamma^\star_{\theta,i}\le \frac{R(v)^{|\theta|}}{n}.
\end{equation}
For every \(p\ge1\), it follows that
\begin{equation}\label{eq:moment_range_bound}
n^{1-p}\le \sum_i(\gamma^\star_{\theta,i})^p
\le n^{1-p} R(v)^{(p-1)|\theta|}.
\end{equation}
For \(p=2\), this gives \(\IPR(\gamma^\star_\theta)\le R(v)^{|\theta|}/n\). A bounded response range therefore rules out few-node condensation in the benchmark. Some growth in that range is necessary for tail-driven concentration.

The envelope also explains why the positive and negative branches can look similar at small \(|\theta|\) yet differ sharply at large \(|\theta|\). Both depend on the same dynamic range, but they sample opposite ends of the response field. A heavy upper tail can be thin and extreme, whereas the lower end can be wide and nearly flat. The sign law chooses the sampled end. Its geometry determines whether the selected set is small or extensive.

For $\theta>0$, the map amplifies large entries of $v$. If finitely many entries of $v$ dominate the positive moments, $P_\theta(v)$ can become a genuine condensate. For $\theta<0$, it instead amplifies the smallest entries of $v$. The lower end of a network response is often broad, with many peripheral nodes having comparably small response. Negative feedback then creates a peripheral cloud rather than a one-node condensate.

More explicitly, let
\[
M_- = \{i:v_i=\min_j v_j\},\qquad M_+=\{i:v_i=\max_j v_j\}.
\]
Then
\begin{equation}\label{eq:endpoint_limits}
P_\theta(v)\to \frac{\mathbf 1_{M_+}}{|M_+|}\quad(\theta\to+\infty),
\qquad
P_\theta(v)\to \frac{\mathbf 1_{M_-}}{|M_-|}\quad(\theta\to-\infty).
\end{equation}
The positive zero-temperature limit is localized only when the top response set is small. Likewise, the negative limit is localized only when the bottom response set is small. If $|M_-|$ is proportional to $n$, the negative endpoint remains broad even at infinite anti-reinforcement.

The same distinction persists at finite temperature. Suppose a set $B_n$ of size at least $cn$ satisfies
\[
v_i\le C\min_j v_j\qquad (i\in B_n)
\]
with constants $c,C>0$ independent of $n$. Then, for every fixed $\theta<0$,
\begin{equation}\label{eq:broad_negative_ipr}
\IPR(P_\theta(v))=O(n^{-1}).
\end{equation}
Indeed, the denominator in the escort normalization receives order $n$ comparable contributions from $B_n$, while no single term can dominate by more than a fixed factor. This motivates the term anti-condensation: the profile moves to the low-response end without necessarily localizing there.

\begin{proposition}[Broad anti-condensation]\label{prop:broad_anti}
Let $v^{(n)}\in\Sn^\circ$ be a sequence of response profiles. Suppose that for some constants $c,C>0$ there are sets $B_n$ with $|B_n|\ge cn$ and $v_i^{(n)}\le C\min_j v_j^{(n)}$ for all $i\in B_n$. Then for every fixed $\theta<0$ the benchmark anti-reinforced profile $P_\theta(v^{(n)})$ has $\IPR=O(n^{-1})$.
\end{proposition}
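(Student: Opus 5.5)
Write $v=v^{(n)}$, $v_{\min}=\min_j v_j$, and $s=-\theta>0$. The escort weights are then $v_i^{\theta}=v_i^{-s}$. Since $v_i\ge v_{\min}$ for every $i$, each weight satisfies $v_i^{-s}\le v_{\min}^{-s}$. Every coordinate of $P_\theta(v)$ is therefore at most $v_{\min}^{-s}$ divided by the normalizing sum $\sum_j v_j^{-s}$. The plan is to bound that sum below by order $n$ times $v_{\min}^{-s}$ using only the set $B_n$, and then pass to the IPR.

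The lower bound on the denominator uses the hypothesis on $B_n$. For $i\in B_n$ we have $v_i\le C v_{\min}$, and $x\mapsto x^{-s}$ is decreasing on $(0,\infty)$, so $v_i^{-s}\ge C^{-s}v_{\min}^{-s}$. Note that $C\ge1$ automatically when $B_n$ is nonempty, and $|B_n|\ge cn\ge1$ for large $n$. Dropping the nonnegative terms outside $B_n$ gives
\[
\sum_j v_j^{-s}\ \ge\ |B_n|\,C^{-s}v_{\min}^{-s}\ \ge\ cn\,C^{-s}v_{\min}^{-s}.
\]
Combining this with the upper bound on the numerator yields $\max_i P_\theta(v)_i\le C^{s}/(cn)$. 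The factor $v_{\min}^{-s}$ cancels, which is essential because $v_{\min}$ may tend to zero arbitrarily fast.

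For the last step I take $\IPR(\gamma)=\sum_i\gamma_i^2$, the convention implicit in \eqref{eq:moment_range_bound}. Then
\[
\IPR(P_\theta(v))\ \le\ \max_i P_\theta(v)_i\sum_i P_\theta(v)_i\ =\ \max_i P_\theta(v)_i\ \le\ \frac{C^{|\theta|}}{c\,n}.
\]
Since $\theta$ is fixed and $c,C$ do not depend on $n$, the right-hand side is $O(n^{-1})$. The matching lower bound $\IPR\ge1/n$ from \eqref{eq:moment_range_bound} shows the order is sharp, although the proposition does not require it.

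No step here is genuinely hard. The one point that needs care is the direction of the inequalities under a negative power. The argument only needs monotonicity of $x\mapsto x^{\theta}$ for $\theta<0$, together with the observation that the minimizer $v_{\min}$ gives simultaneously the largest individual weight and the scale of the $B_n$ contributions. Nothing else about the tail of $v$, such as its upper tail or dynamic range $R(v)$, enters the bound.
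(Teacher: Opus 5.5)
Your proof is correct and follows the paper's route. You bound every escort weight above by $(\min_j v_j)^{\theta}$, bound the normalizing sum below by $cn\,C^{\theta}(\min_j v_j)^{\theta}$ using $B_n$, and let the minimum cancel. The only difference is the last step: you use $\sum_i\gamma_i^2\le\max_i\gamma_i$, whereas the paper bounds $\sum_i w_i^2\le n(\min_j v_j)^{2\theta}$ and divides by the squared normalizer, so you get the slightly sharper constant $C^{|\theta|}/c$ in place of $C^{2|\theta|}/c^2$.
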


\begin{proof}
Let \(m_n=\min_j v_j^{(n)}\) and put \(w_i=(v_i^{(n)})^\theta\). Since \(\theta<0\), every coordinate satisfies \(w_i\le m_n^\theta\). On \(B_n\), the assumption gives \(w_i\ge (Cm_n)^\theta=C^\theta m_n^\theta\). Hence
\[
Z_n:=\sum_i w_i\ge |B_n|C^\theta m_n^\theta\ge cn C^\theta m_n^\theta,
\qquad
\sum_i w_i^2\le n m_n^{2\theta}.
\]
Therefore
\[
\IPR(P_\theta(v^{(n)}))
=\frac{\sum_iw_i^2}{Z_n^2}
\le \frac{n m_n^{2\theta}}{c^2n^2C^{2\theta}m_n^{2\theta}}
=\frac{C^{2|\theta|}}{c^2 n}.
\]
\end{proof}

Under the comparison in Proposition~\ref{prop:benchmark_transfer}, the full fixed point has the same $O(n^{-1})$ behavior. With uniform response comparability along the graph sequence, broad anti-condensation is therefore not merely a benchmark statement.

Before computing a fixed point, one can obtain a rough bound on the possible concentration. Suppose that, on a class of graphs, the response coordinate $v_i$ is comparable to a centrality score $c_i$ in the sense that
\[
C_1 c_i\le v_i\le C_2 c_i
\]
with constants independent of $n$. Then the benchmark profile satisfies
\begin{equation}\label{eq:envelope_bound_main}
\frac{C_1^{\theta}c_i^{\theta}}{\sum_j C_2^{\theta}c_j^{\theta}}
\le P_\theta(v)_i\le
\frac{C_2^{\theta}c_i^{\theta}}{\sum_j C_1^{\theta}c_j^{\theta}}
\qquad (\theta>0),
\end{equation}
For $\theta<0$, the same bounds hold after interchanging $C_1$ and $C_2$. This elementary estimate identifies which part of the graph can carry mass before more detailed asymptotics are invoked. In the undirected fast-mixing case, the score is essentially degree. On a directed network, it is the relevant stationary or resolvent response, and degree alone can be misleading.

The elementary bound also prevents a common misreading. Negative feedback does not imply that the single least responsive node must win. It selects the low-response end, which may be a point, a finite set, or a macroscopic cloud.

\subsection{Tilt, IPR, and true condensation}

The inverse participation ratio,
\[
\IPR(\gamma)=\sum_i\gamma_i^2
\]
is a top-sensitive diagnostic. It is close to $1/n$ when a profile is spread over order $n$ nodes. Along a graph sequence, we use ``few-node localization'' operationally to mean that the IPR does not vanish. Up to constants, this is equivalent to at least one node retaining a non-vanishing share. It does not require all mass to occupy a single node. Degree tilt alone need not change the IPR appreciably.

The cleanest asymptotic calculation uses the fast-mixing benchmark, in which the fixed point is approximated by the escort of the Perron profile:
\begin{equation}\label{eq:benchmark_profile}
\gamma_\theta^\star=P_\theta(v),
\end{equation}
where $v$ is the normalized Perron profile of the network response. For an undirected graph with random-walk normalization, $v_i\propto d_i$. Rather than replacing the full fixed point, the benchmark isolates the heavy-tail calculation and gives the correct limiting picture in the fast-mixing/slow-forcing regime.

Let $u_i^{(n)}=n v_i^{(n)}$, put $N_n(x)=\#\{i:u_i^{(n)}>x\}$, and write $\tau=\alpha-1\in(1,2)$. We use the following uniform truncated-tail assumption: there are constants $x_0,c_0,c_1,c_2,C_0>0$, independent of $n$, such that
\begin{equation}\label{eq:power_tail}
c_1nx^{-\tau}\le N_n(x)\le c_2nx^{-\tau}
\quad (x_0\le x\le c_0n^\zeta),
\qquad
u_{\max}^{(n)}\le C_0n^\zeta.
\end{equation}
The lower bound at the upper end also gives $u_{\max}^{(n)}\ge c_0n^\zeta$. This formulation makes the near-cutoff control in the truncated-moment estimate explicit. It holds, for example, for deterministic regularly varying quantile profiles with a fixed positive lower cutoff and either the natural cutoff $\zeta=1/(\alpha-1)$ or the structural cutoff $\zeta=1/2$. For random profiles, it is an explicit samplewise assumption. The Monte Carlo calculations below are finite-sample illustrations and do not assert that the two-sided bound holds with high probability at the natural cutoff.

\begin{theorem}[Benchmark participation-ratio crossover]\label{thm:ipr_transition}
For the benchmark profile $\gamma_\theta^\star=P_\theta(v^{(n)})$ with $\theta\in(0,1)$,
\[
\IPR(\gamma_\theta^\star)
=\frac{\sum_i (u_i^{(n)})^{2\theta}}
       {\left(\sum_i (u_i^{(n)})^\theta\right)^2}.
\]
Away from the logarithmic boundary cases,
\[
\IPR(\gamma_\theta^\star)\asymp
\begin{cases}
n^{-1}, & 2\theta<\alpha-1,\\[3pt]
n^{-1+\zeta(2\theta-(\alpha-1))}, & 2\theta>\alpha-1 .
\end{cases}
\]
At $2\theta=\alpha-1$, $\IPR(\gamma_\theta^\star)\asymp n^{-1}\log n$. Thus the IPR first leaves the uniform scaling at
\[
\theta_2=\frac{\alpha-1}{2}.
\]
\end{theorem}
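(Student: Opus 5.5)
The first display follows from the homogeneity of $P_\theta$. Since $u^{(n)}=nv^{(n)}$ and $P_\theta(\lambda u)=P_\theta(u)$ for every $\lambda>0$, we have $\gamma_\theta^\star=P_\theta(u^{(n)})$, and squaring and summing gives the ratio. Both the numerator and the denominator are power sums $S_p:=\sum_i (u_i^{(n)})^p$, with $p=2\theta$ and $p=\theta$. The plan is therefore to estimate $S_p$ up to constants for each $p\in(0,2)$ using the tail assumption \eqref{eq:power_tail}, and then to take the ratio $S_{2\theta}/S_\theta^2$.

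For the power sums I would use the layer-cake representation
\[
S_p=\sum_i u_i^p=\int_0^\infty p\,x^{p-1}N_n(x)\,dx ,
\]
and split the integral at $x_0$, at $c_0n^\zeta$ and at $u_{\max}\le C_0n^\zeta$.
\begin{itemize}
\item On $[0,x_0]$, the trivial bound $N_n\le n$ gives a contribution $O(n)$.
\item On $[x_0,c_0n^\zeta]$, the two-sided tail bound gives a contribution comparable to $n\int_{x_0}^{c_0n^\zeta}x^{p-1-\tau}\,dx$. This is $\asymp n$ if $p<\tau$, $\asymp n\log n$ if $p=\tau$, and $\asymp n\cdot n^{\zeta(p-\tau)}$ if $p>\tau$.
\item On $[c_0n^\zeta,u_{\max}]$, monotonicity gives $N_n(x)\le N_n(c_0n^\zeta)\le c_2c_0^{-\tau}n^{1-\zeta\tau}$, and the interval has length $O(n^\zeta)$. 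The contribution is therefore $O(n^{1+\zeta(p-\tau)})$, which never exceeds the middle-range order.
\end{itemize}
For matching lower bounds I would use the middle range. When $p\le\tau$, one more ingredient is needed: a linear number of coordinates must be bounded below, so that $S_p\gtrsim n$. This follows from $N_n(x_0)\ge c_1nx_0^{-\tau}$, since those coordinates exceed $x_0$ and so contribute at least $x_0^p\,c_1 n x_0^{-\tau}$. The outcome is that $S_p\asymp n$ for $p<\tau$, $S_p\asymp n\log n$ for $p=\tau$, and $S_p\asymp n^{1+\zeta(p-\tau)}$ for $p>\tau$.

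The crossover then follows from the ratio. Since $\theta<1<\tau$, the denominator always satisfies $S_\theta^2\asymp n^2$. Taking $p=2\theta$ in the numerator gives three cases.
\begin{itemize}
\item If $2\theta<\tau=\alpha-1$, then $\IPR\asymp n^{-1}$.
\item If $2\theta=\alpha-1$, then $\IPR\asymp n^{-1}\log n$.
\item If $2\theta>\alpha-1$, then $\IPR\asymp n^{-1+\zeta(2\theta-(\alpha-1))}$.
\end{itemize}
This identifies $\theta_2=(\alpha-1)/2$ as the threshold.

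I expect the main obstacle to be the near-cutoff control when $p>\tau$. There the sum is dominated by the largest coordinates, and the two-sided bound holds only up to $c_0n^\zeta$. For the upper bound, the monotonicity argument above handles the strip $[c_0n^\zeta,u_{\max}]$ without any distributional knowledge beyond $u_{\max}\le C_0n^\zeta$. For the lower bound, I would avoid the integral and count directly. The lower tail bound at $x=c_0n^\zeta/2$ gives at least $c\,n^{1-\zeta\tau}$ coordinates of size $\gtrsim n^\zeta$. This needs $n^{1-\zeta\tau}\gtrsim1$, which holds because $\zeta\le1/\tau$. Those coordinates alone contribute $\gtrsim n^{1-\zeta\tau+p\zeta}$. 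The logarithmic boundary case requires only the middle-range integral, with both tail constants giving $\log n$ up to constants.
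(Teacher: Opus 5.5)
Your proposal is correct and follows essentially the same route as the paper: a layer-cake truncated-moment estimate for $\sum_i (u_i^{(n)})^s$ in the three regimes $s<\alpha-1$, $s=\alpha-1$, $s>\alpha-1$ (which the paper defers to Online Resource~1), followed by dividing $S_{2\theta}$ by $S_\theta^2\asymp n^2$ using $\theta<1<\alpha-1$. Your separate direct count for the lower bound when $p>\tau$ is valid but not needed. Integrating the lower tail bound $c_1nx^{-\tau}$ over $[x_0,c_0n^\zeta]$ already gives $S_p\gtrsim n^{1+\zeta(p-\tau)}$, and the same integral gives $S_p\gtrsim n$ when $p<\tau$.
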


\begin{proof}
Let $S_s=\sum_i(v_i^{(n)})^s=n^{-s}\sum_i(u_i^{(n)})^s$. The uniform tail assumption gives the standard truncated-moment estimate for a regularly varying sequence~\cite{binghamgoldieteugels1989}. The short layer-cake derivation is reproduced in Online Resource~1 (Sec.~S4):
\[
\sum_i(u_i^{(n)})^s\asymp
\begin{cases}
n,&s<\alpha-1,\\ n\log n,&s=\alpha-1,\\
n^{1+\zeta(s-(\alpha-1))},&s>\alpha-1.
\end{cases}
\]
Because $\theta<1<\alpha-1$, $S_\theta\asymp n^{1-\theta}$. Applying the displayed estimate to $S_{2\theta}$ and dividing by $S_\theta^2$ proves all three cases.
\end{proof}

For $\alpha\in(2,3)$, this threshold lies inside $(1/2,1)$. The IPR can therefore show anomalous scaling while the frozen-response map remains projectively contracting. Full few-node localization is stronger. In this benchmark, an order-one IPR would require
\[
\zeta\bigl(2\theta-(\alpha-1)\bigr)\ge1,
\]
which lies outside $\theta<1$ for both the natural and structural cutoffs discussed above. The IPR anomaly is therefore an early concentration effect, not necessarily a one-hub condensate.

The moment family resolves the same phenomenon more finely. For $q>0$, define
\[
\mathcal M_q(\gamma)=\sum_i\gamma_i^q,
\qquad
\Phi_n(s)=\log\sum_i v_i^s .
\]
For the benchmark profile,
\begin{equation}\label{eq:pressure_identity}
\mathcal M_q(\gamma_\theta^\star)
=\frac{\sum_i v_i^{q\theta}}{\left(\sum_i v_i^\theta\right)^q}
=\exp\{\Phi_n(q\theta)-q\Phi_n(\theta)\}.
\end{equation}
The IPR is the case $q=2$. Low moments describe the broad part of the profile, while high moments respond to the extreme tail. The Gini coefficient, effective support, top share, and IPR therefore answer different questions. Relying on only one can be misleading for finite heavy-tailed graphs.

The pressure identity also yields a finite-size susceptibility. If
\[
Z_n(\beta)=\sum_i v_i^\beta,
\qquad \Phi_n(\beta)=\log Z_n(\beta),
\]
then
\begin{equation}\label{eq:pressure_derivatives_main}
\Phi_n'(\theta)=\sum_i \gamma^\star_{\theta,i}\log v_i,
\qquad
\Phi_n''(\theta)=\operatorname{Var}_{\gamma^\star_\theta}(\log v_i).
\end{equation}
The entropy of the benchmark profile is
\begin{equation}\label{eq:entropy_pressure_main}
H(\gamma^\star_\theta)=\Phi_n(\theta)-\theta\Phi_n'(\theta).
\end{equation}
The IPR measures second-moment concentration, entropy measures effective support, and $\Phi_n''$ measures local sensitivity to the feedback exponent. These quantities probe different moments and need not share an asymptotic crossover. Theorem~\ref{thm:ipr_transition} locates only the IPR boundary $2\theta=\alpha-1$. More generally, in the same $0<\theta<1$ regime, the moment $\mathcal M_q$ changes scaling when $q\theta=\alpha-1$. Entropy and susceptibility require separate analysis.

For the full fixed point, the benchmark formulas are exact only under the comparison in Proposition~\ref{prop:benchmark_transfer}. Without it, they serve as asymptotic guides. If the response matrix is close to a rank-one mixing response on the relevant time scale, \(\mathbf Q\gamma\) is close to the Perron response uniformly over \(\gamma\), and the benchmark is accurate. For a strongly modular, directed, or assortative graph, the formulas remain useful after degree is replaced by the appropriate response coordinate, but pointwise predictions are less reliable. The numerical comparisons below therefore distinguish benchmark tail calculations from full fixed-point calculations.

\section{Numerical illustration}\label{sec:numerics}

The simulations check the proposed mechanism rather than prove the estimates. We locate the injection-profile fixed point by iterating the normalized map and, for the convergence check, also propagate the rescaled recursion directly. The resulting profiles are compared with the signs and scalings derived above.

The numerical design mirrors the analysis. We first check that different initial conditions converge to the same profile in the contracting regime. We then vary \(\theta\) on a fixed graph, since the sign law concerns feedback rather than a changing substrate. Changes in the response-tail exponent and system size test the IPR threshold. Directed examples serve only to illustrate loss of stability outside the certified regime. They are not evidence for the heavy-tail scaling theorem.

Every transport matrix in the network calculations is column-stochastic. For the undirected examples, transport is the random-walk normalization of the adjacency matrix after primitive regularization. Self-loops or a small positive background make the directed examples primitive. In the range $|\theta|<1$, we compute fixed points by undamped Picard iteration from the uniform profile. A run is accepted only when both the Hilbert residual and the $\ell^1$ residual $\|T(z)-z\|_1$ fall below tolerance. Figures~\ref{fig:ipr_scaling} and~\ref{fig:ipr_threshold}, by contrast, are Monte Carlo calculations for the benchmark $P_\theta(v^{(n)})$. They involve neither a transport matrix nor a forcing-rate parameter. The other panels use the full response map or the direct recursion. Online Resource~1 (Sec.~S6) reports all parameters and numerical conventions.

We first test convergence directly in Figure~\ref{fig:convergence}. Starting from several initial mass vectors, the injection profiles $\gamma_t$ approach the same fixed point, and $d_H(\gamma_t,\gstar)$ decays geometrically after a short transient. This observation applies to the displayed sparse instance. Theorem~\ref{thm:share_convergence} supplies a sufficient small-gain certificate, but we do not claim that its deliberately conservative constants certify this example.

\begin{figure}[ht]
\centering
\includegraphics[width=0.85\linewidth]{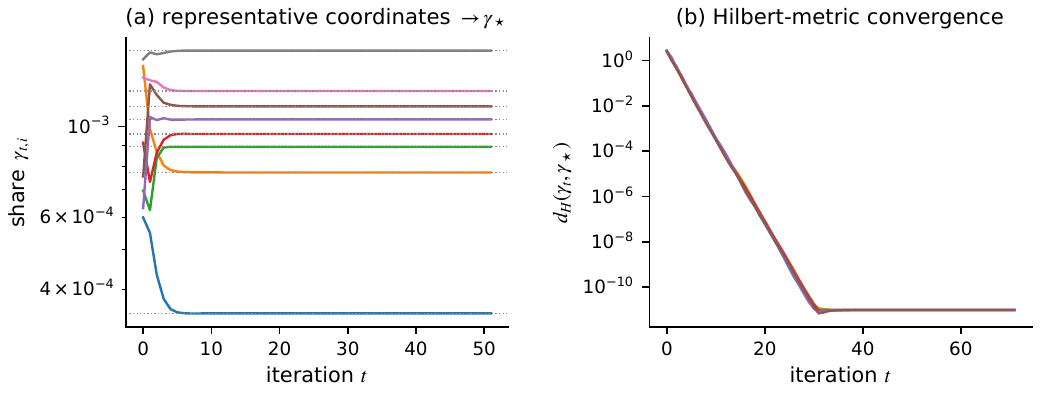}
\caption{Observed convergence of the injection profile from different initial mass profiles on a finite sparse primitive network ($n=10^3$, $\theta=0.5$, $\pi=0.3$). The Hilbert distance to the numerically verified fixed point decays geometrically after the filter transient, and all displayed initial conditions approach the same injection profile}
\label{fig:convergence}
\par\smallskip\noindent\textit{Alt text:} Two panels show representative injection shares converging to horizontal fixed-point levels and six Hilbert-distance curves decreasing approximately as straight lines on a logarithmic vertical scale.
\end{figure}
\FloatBarrier

The next check concerns the forcing rate. When mixing is fast relative to forcing, the selected profile is close to the \(\pi\)-free benchmark \(P_\theta(v)\), where \(v\) is the Perron response of the transport matrix. In Figure~\ref{fig:irrelevance}, the controlled-gap example uses \(\mathbf A=\lambda_2 I+(1-\lambda_2)v\1^\top\), so every non-Perron mode has eigenvalue \(\lambda_2\). The distance is approximately linear in \(\pi/(1-|\lambda_2|)\). This does not mean that \(\pi\) is always irrelevant. The approximation is controlled when the forcing rate is small relative to the spectral mixing gap. On a slowly mixing or modular graph, the longer resolvent memory can matter substantially.

\begin{figure}[ht]
\centering
\includegraphics[width=0.68\linewidth]{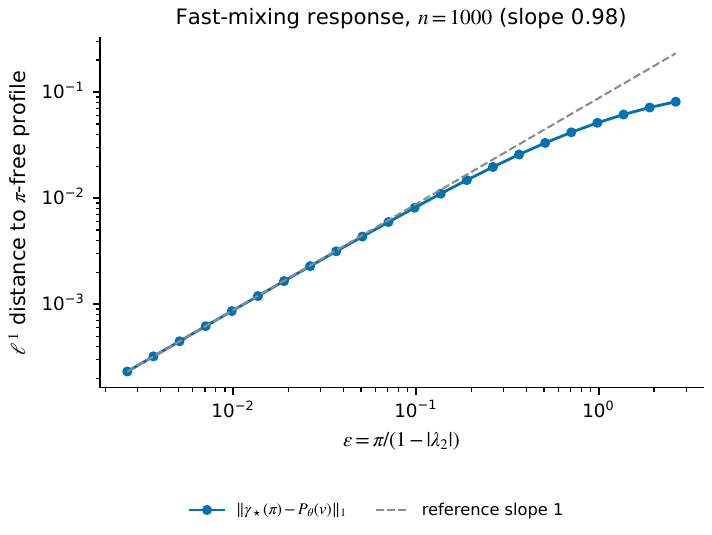}
\caption{Forcing-rate dependence on a controlled-gap positive transport matrix ($n=10^3$, $\theta=0.5$, $\lambda_2=0.62$). The distance between the selected injection profile at forcing rate $\pi$ and the benchmark injection profile $P_\theta(v)$ shrinks in proportion to $\pi/(1-|\lambda_2|)$ as this ratio tends to zero. The comparison checks the rate of convergence to the benchmark, not a universal independence of $\pi$ on every graph}
\label{fig:irrelevance}
\par\smallskip\noindent\textit{Alt text:} A log--log plot shows the distance from the full injection profile to the Perron benchmark against the forcing-rate-to-gap ratio. At small ratios, the data follow a reference line of slope one.
\end{figure}
\FloatBarrier

Figure~\ref{fig:theta} follows the fixed point along the positive-feedback branch of a single graph. Forcing is uniform at $\theta=0$. As $\theta$ increases, the injection profile moves continuously toward the Perron response. High-response, high-degree nodes gain share, while low-response nodes lose it. The figure is deliberately simpler than a full phase diagram. Its purpose is to show the central mechanism in the stable regime: the feedback exponent reweights a fixed response field without changing the graph.

\begin{figure}[ht]
\centering
\includegraphics[width=0.72\linewidth]{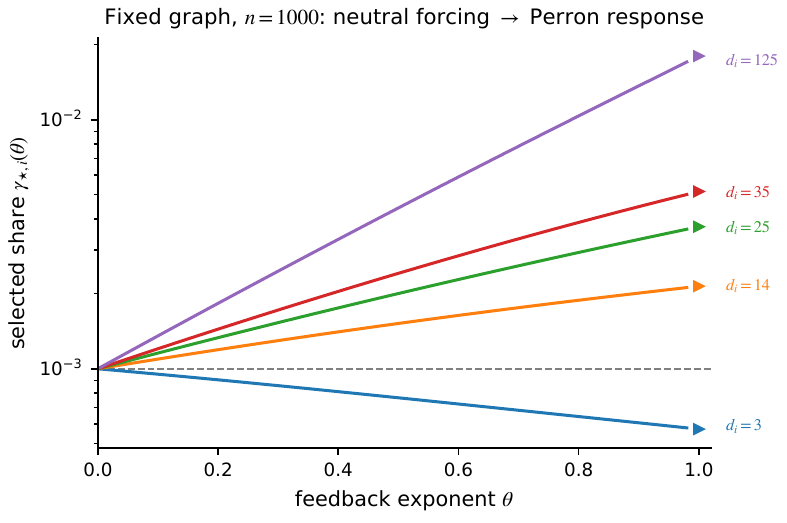}
\caption{Selected injection profile along the positive feedback branch on a fixed Barab\'asi--Albert graph ($n=10^3$, attachment parameter $m=3$, $\pi=0.3$). The neutral profile at $\theta=0$ is uniform, while the branch moves toward the Perron response as $\theta$ approaches one. Representative node coordinates, labelled by degree, show the hub-directed tilt before any claim of few-node localization is made}
\label{fig:theta}
\par\smallskip\noindent\textit{Alt text:} A semilogarithmic line plot tracks five injection shares from neutral feedback toward the Perron endpoint. High-degree nodes rise with the exponent, and the lowest-degree node falls.
\end{figure}

Figure~\ref{fig:condensation} illustrates the sign law. We measure tilt by the injection-weighted mean degree $\langle d\rangle_\gamma=\sum_i\gamma_{\star i}\,d_i$, divided by the unweighted mean $\langle d\rangle=n^{-1}\sum_i d_i$. This ratio equals one for uniform injection. As $\theta$ passes through zero, the tilt crosses its neutral value at $\theta=0$. The injection profile favors hubs for $\theta>0$ and the periphery for $\theta<0$. Rewiring the same degree sequence changes the strength of this effect. The displayed range is restricted to $|\theta|\le0.9$, where the graph-independent contraction bound applies, and every point comes from undamped iteration of $T$. Across much of this range, the normalized injection-profile IPR remains close to its broad-profile value. Response tilt and few-node localization are therefore visibly distinct.

\begin{figure}[ht]
\centering
\includegraphics[width=\linewidth]{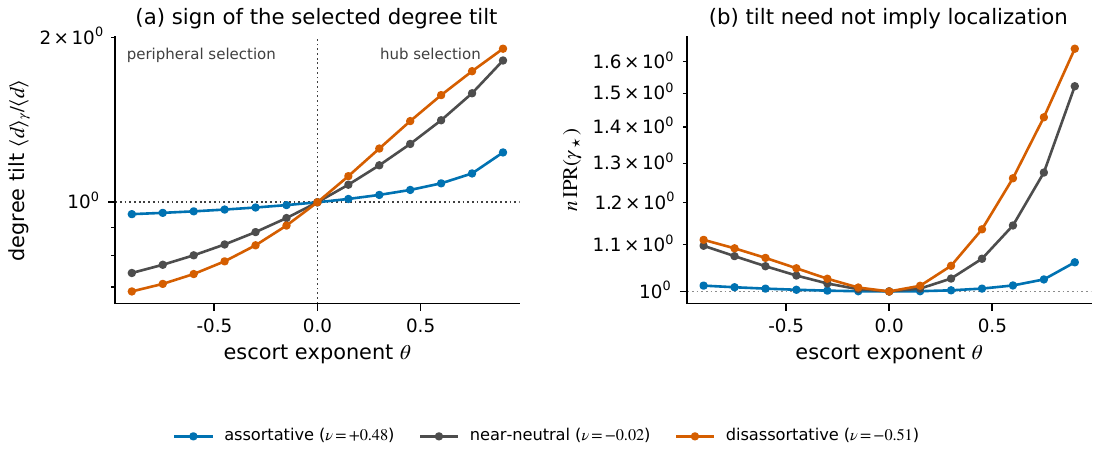}
\caption{Feedback-selected injection tilt on a network generated from a degree sequence with target power-law exponent $\alpha=2.5$ ($n=420$, $\pi=5$, teleportation weight $0.004$), rewired to three assortativities from the same realized degree sequence. The legend reports the standard symmetric undirected degree-assortativity coefficient $\nu$: positive, negative, and near-zero values denote assortative, disassortative, and near-neutral mixing, respectively. Within the displayed contracting range, the degree tilt moves below one for $\theta<0$ and above one for $\theta>0$, while the normalized injection-profile IPR shows that tilt need not imply few-node localization}
\label{fig:condensation}
\par\smallskip\noindent\textit{Alt text:} Two panels compare assortative, near-neutral, and disassortative rewirings. Their degree-tilt curves cross one at zero feedback, falling on the negative branch and rising on the positive branch. The normalized IPR curves remain much flatter.
\end{figure}
\FloatBarrier

Figures~\ref{fig:ipr_scaling} and~\ref{fig:ipr_threshold} examine the stochastic finite-sample counterpart of Theorem~\ref{thm:ipr_transition}. Independent truncated Pareto response profiles are passed through the escort map. For $\alpha=2.5$, the predicted boundary is $\theta_2=(\alpha-1)/2=0.75$. The IPR follows the dispersed scaling $n^{-1}$ below this value and decays more slowly above it. We estimate the finite-size slope by least-squares regression of $\log\IPR$ on $\log n$. Changing the tail exponent shifts the departure from slope $-1$. Rescaling the horizontal axis by $2\theta/(\alpha-1)$ then collapses the curves near the predicted boundary. These panels illustrate the truncated-moment calculation. The condition for transferring the exponent to a full network fixed point comes from Proposition~\ref{prop:benchmark_transfer}, not from the simulation itself.

\begin{figure}[ht]
\centering
\includegraphics[width=\linewidth]{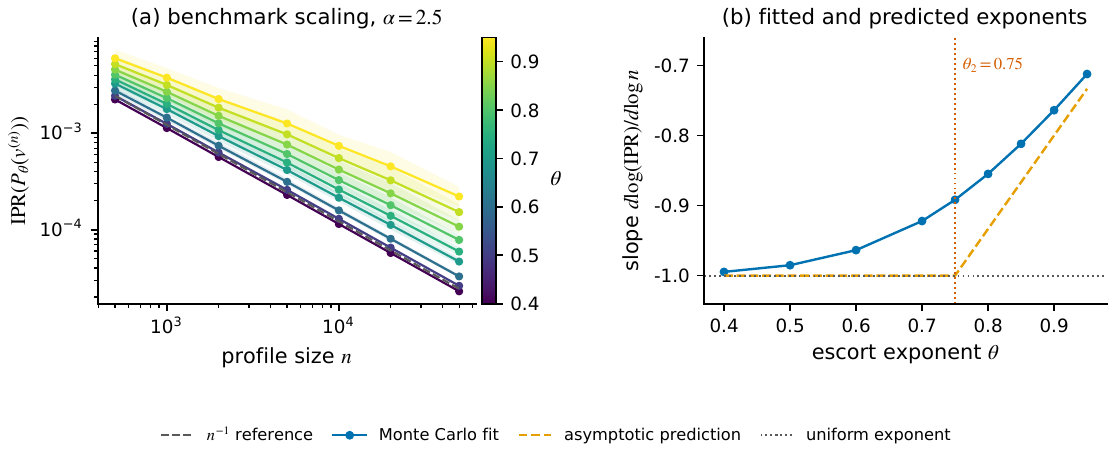}
\caption{IPR scaling of the benchmark injection profiles $P_\theta(v)$ generated from truncated Pareto response profiles with tail exponent $\alpha-1=1.5$ and natural cutoff $n^{1/(\alpha-1)}$. Points are Monte Carlo means and bands show the 10th--90th percentiles. Below $\theta_2=0.75$, the curves track $n^{-1}$. Above it, the IPR decays more slowly. The right panel compares fitted finite-size slopes with the asymptotic prediction}
\label{fig:ipr_scaling}
\par\smallskip\noindent\textit{Alt text:} The left log--log panel shows IPR decreasing with profile size for several feedback exponents, with slower decay above 0.75. The right panel shows fitted slopes leaving minus one near 0.75 and approaching the piecewise theoretical curve.
\end{figure}

\begin{figure}[ht]
\centering
\includegraphics[width=\linewidth]{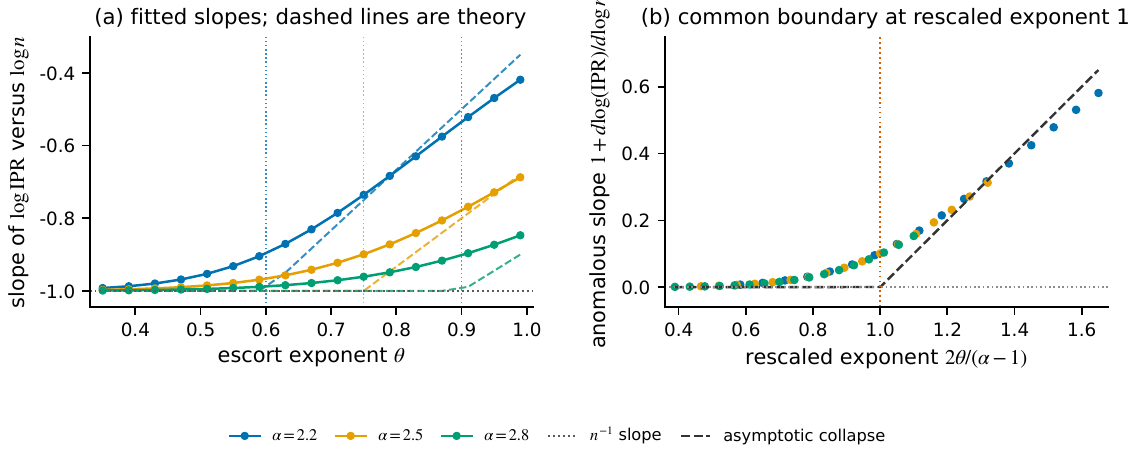}
\caption{Tail-exponent dependence of the IPR crossover for the benchmark injection profile, using the natural cutoff for each value of $\alpha$. The left panel compares fitted finite-size slopes with the piecewise asymptotic prediction for three values of $\alpha$. The right panel plots the anomalous part of the slope against $2\theta/(\alpha-1)$. The predicted boundary is the common value one. No forcing-rate parameter enters this benchmark calculation}
\label{fig:ipr_threshold}
\par\smallskip\noindent\textit{Alt text:} Three slope curves depart from minus one at different feedback exponents in the left panel. After rescaling feedback by the tail exponent in the right panel, the points lie near one common kink and a piecewise-linear theoretical line.
\end{figure}
\FloatBarrier

For a full-model check of the degree-class approximation, Figure~\ref{fig:moment_degree} uses one fixed Barab\'asi--Albert graph. Every plotted profile is obtained from $\gamma_\star=P_\theta(\mathbf Q\gamma_\star)$ with the actual sparse random-walk transport and discounted resolvent. The Herfindahl index, which is identical to the IPR, changes strongly with $\theta$. Its dependence on $\pi$ is weaker over the displayed range. At $\theta=0.6$ and $\pi=2$, the degree-bin means follow an approximate power law, although nodes of the same degree show genuine scatter from local structure. The degree-class law is therefore a binned approximation, not a claim that degree alone determines every coordinate.

\begin{figure}[ht]
\centering
\includegraphics[width=0.94\linewidth]{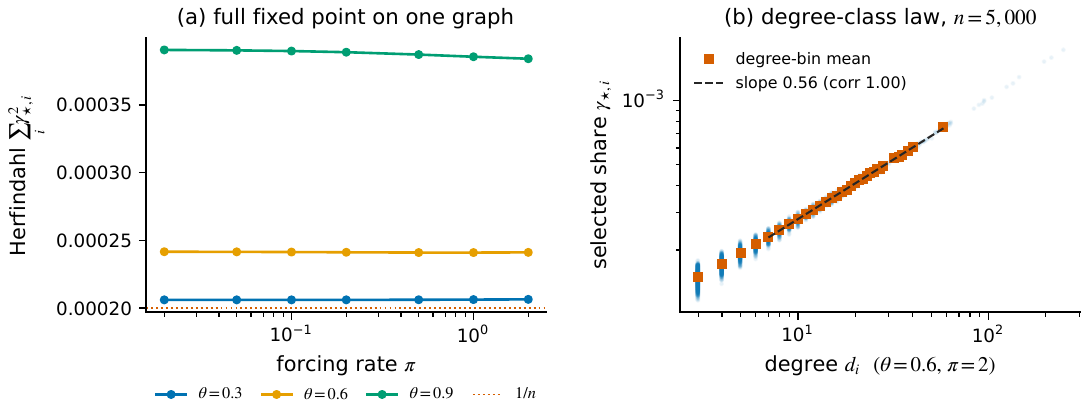}
\caption{Full fixed-point calculations for the injection profile on one Barab\'asi--Albert graph ($n=5000$, attachment parameter $m=3$) with self-loop regularization. The left panel varies the forcing rate using the actual resolvent response. The right panel shows injection shares and degree-bin means at $\theta=0.6$ and $\pi=2$. The displayed fit is descriptive and is not used to estimate the tail threshold}
\label{fig:moment_degree}
\par\smallskip\noindent\textit{Alt text:} The left panel shows three Herfindahl curves, ordered by feedback exponent, changing moderately with forcing rate. The right log--log panel shows scattered injection shares by degree, square degree-bin means, and a fitted dashed line.
\end{figure}
\FloatBarrier

Figure~\ref{fig:bifurcation} concerns the frozen-response map after its attracting fixed point loses local stability. On a directed-cycle network with a small positive background, the reverse kernel is non-reversible and the leading local multipliers form a complex pair. Beyond the crossing, the projected map orbit rotates around the fixed point. This calculation does not establish periodicity, nor does it provide a stability theorem for the original finite-memory recursion outside Theorem~\ref{thm:share_convergence}. It is separate from the heavy-tail proof.

\begin{figure}[ht]
\centering
\includegraphics[width=0.9\linewidth]{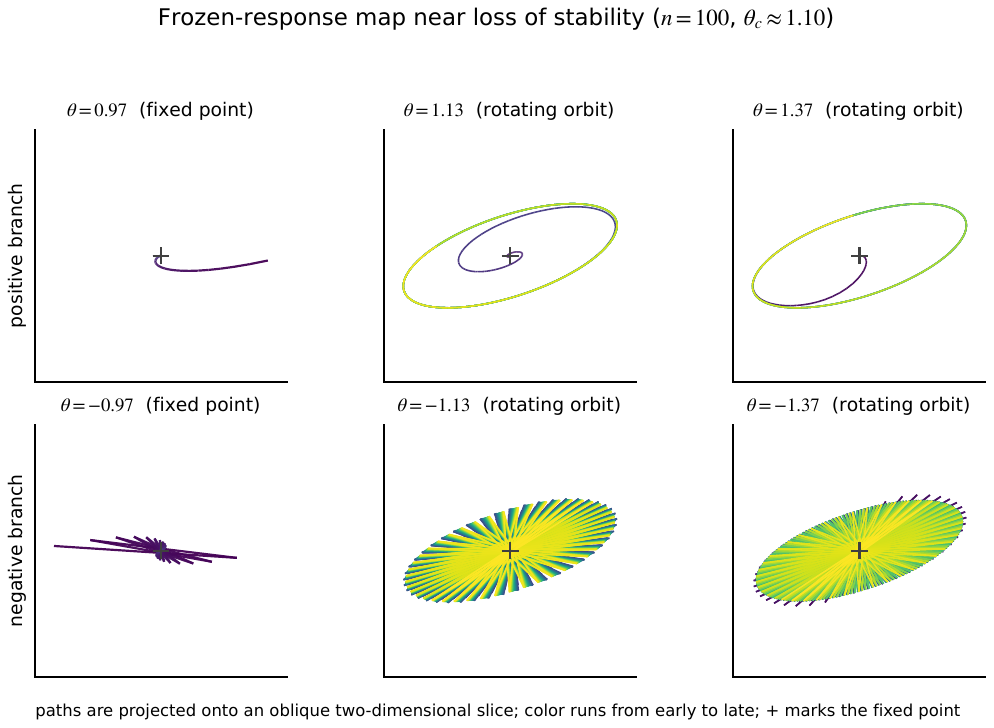}
\caption{Loss of local stability for the iteration of the frozen-response map on a directed non-reversible cycle with $n=100$, forcing rate $\pi=8$, and uniform background entry $8\times10^{-4}$. The top and bottom rows show the positive and negative exponent branches, respectively. The map orbit approaches the fixed point below the crossing and forms a rotating projected orbit beyond it, consistent with the local multiplier picture based on the deflated reverse kernel}
\label{fig:bifurcation}
\par\smallskip\noindent\textit{Alt text:} Six phase-plane projections compare positive and negative feedback below, near, and above the local stability boundary. Paths contract to the central fixed point below it and circle around the centre beyond it.
\end{figure}
\FloatBarrier

The experiments support this interpretation. The fixed point is robust in the contracting regime. The sign of $\theta$ chooses the direction of the phase, assortativity controls its strength, and the IPR transition reflects the tail rather than a merely visual increase in inequality. The simulations also expose the limits of the asymptotic statements. Degree-class laws work well after binning but are noisy pointwise, and finite graphs show crossovers rather than sharp thresholds. Similarly, the bifurcation panel illustrates the map alone. It is not evidence about the full recursion beyond the sufficient convergence condition.

These finite-size qualifications are substantive. A real network has a largest hub, a finite peripheral class, and non-universal correlations. The theorem predicts scaling along graph sequences, not a sharp thermodynamic singularity on a single finite graph. Empirical work should therefore report several diagnostics together: degree or response tilt, IPR, effective support, and top share. Agreement among them is strong evidence of true localization. Disagreement usually points to a broad tilted phase.

\section{Conclusion}\label{sec:conclusion}

We have studied an open transport process on a fixed network, with transport mixing the existing mass and feedback choosing where new mass enters. Once total growth is removed, the injection profile is governed by a nonlinear Perron--Frobenius-type fixed point. Its central contraction estimate is simple. The positive network response contracts Hilbert's projective metric, and the escort map rescales that metric by $|\theta|$. Under explicit mixing--forcing conditions, the fixed point attracts every admissible orbit.

Heavy-tailed heterogeneity alone does not determine where new mass enters. The sign of the feedback exponent selects an end of the response field. Positive feedback directs mass toward high-response nodes, often the hubs, while negative feedback shifts it toward nodes with low response. Assortativity and mixing determine the strength of this tilt and whether it develops into genuine localization. Transport links the standing stock to the injection profile, but the two can differ in both tilt and localization.

The diagnostics also need to be distinguished. Broad degree tilt, anomalous IPR scaling, and true few-node localization describe different forms of concentration. The heavy-tail calculation locates the first departure of the IPR from uniform scaling. The zero-temperature limits explain why the negative branch can remain broad even under strong anti-reinforcement. This distinction matters on finite networks, where a visibly unequal profile need not be a condensate in the participation-ratio sense.

For this reason, condensation on a finite network should not be inferred from a single plot or inequality measure. Response tilt shows whether mass is drawn toward high- or low-response nodes. IPR and entropy indicate whether the favored group is genuinely small. Degree-binned averages reveal whether the apparent centrality pattern extends across the network or is driven by a few exceptional nodes. This warning is especially relevant to anti-condensation, where the selected low-response class may be extensive.

This minimal model leaves several extensions open, including random forcing, time-dependent networks, conserved components, and empirical estimation of the feedback exponent. Another direction is to study standing-stock and edge-current observables after identifying the selected injection profile, particularly on directed networks near a loss of fixed-point stability.

The benchmark IPR exponent is exact for the uncorrelated annealed degree-class response, but not for every quenched sparse graph sequence. Proposition~\ref{prop:benchmark_transfer} states the comparison condition needed to transfer the exponent to the full fixed point. At fixed forcing, sparse graphs can violate this condition, as \eqref{eq:sparse_transfer_obstruction} shows. Without uniform response control, the full resolvent must be analyzed or computed directly. The caveat is particularly important for modular or slowly mixing networks.

\section*{Statements and Declarations}

\subsection*{Funding}
This work received no specific grant from any funding agency in the public, commercial, or not-for-profit sectors.

\subsection*{Competing interests}
The authors have no competing interests to declare that are relevant to the content of this article.

\subsection*{Data availability}
No external datasets were used. The numerical data underlying the figures can be reproduced using the code supplied as Online Resource~2.

\subsection*{Code availability}
The complete figure-generation code is supplied as Online Resource~2. It records all random seeds, parameters, convergence tolerances, and residual checks.

\subsection*{Supplementary information}
Online Resource~1: Supplementary technical derivations, proofs, and numerical implementation details supporting the main article. Online Resource~2: Reproducibility code and environment specification for generating all eight numerical figures.

{\small
\bibliographystyle{unsrtnat}
\bibliography{Condensation}
}

\end{document}